\theoremstyle{definition}
\DeclareMathOperator{\rank}{rank}
\theoremstyle{plain}
\newtheorem{example}{Example}
\newtheorem{definition}{Definition}
\newcommand{\reals}{\mathbb{R}}
\tikzset{/tikz/notestyleraw/.append style={rounded corners=0pt,inner sep=0.6ex}}
\title{Proportional Representation in Practice:\\
Quantifying Proportionality in Ordinal Elections
}
\author{
Tuva Bardal\footnote{University of Warwick, Coventry, UK; email: tuva.bardal@warwick.ac.uk}
\and Markus Brill\footnote{University of Warwick, UK; email: markus.brill@warwick.ac.uk} 
\and David McCune \footnote{William Jewell College, Liberty, MO, United States; email: mccuned@william.jewell.edu}
\and Jannik Peters\footnote{National University of Singapore, Singapore; email: peters@nus.edu.sg} }
\begin{document}

\maketitle

\begin{abstract}
Proportional representation plays a crucial role in electoral systems. In ordinal elections, where voters rank candidates based on their preferences, the \textit{Single Transferable Vote (STV)} is the most widely used proportional voting method. STV is considered proportional because it satisfies an axiom requiring that large enough ``solid coalitions'' of voters are adequately represented. Using real-world data from local Scottish elections, we observe that solid coalitions of the required size rarely occur in practice. This observation challenges the importance of proportionality axioms and raises the question of how the proportionality of voting methods can be assessed beyond their axiomatic performance. We address these concerns by developing quantitative measures of proportionality. We apply these measures to evaluate the proportionality of voting rules on real-world election data. Besides STV, we consider \textit{SNTV}, the \textit{Expanding Approvals Rule}, and \textit{Sequential Ranked-Choice Voting}. 
We also study the effects of ballot truncation by artificially completing truncated ballots and comparing the proportionality of outcomes under complete and truncated ballots.
\end{abstract}

\section{Introduction}

Proportional representation is a core principle of modern electoral systems, its goal being to ensure that elected officials proportionally represent the composition of the electorate. Most countries achieve proportional representation through party-list systems, where voters cast their ballots for political parties rather than individual candidates, and legislative seats are allocated in proportion to the number of votes each party receives. However, several countries---mostly from the former British commonwealth---use a system in which voters can vote for individual candidates using ranked preferences. In such settings, the electorate is usually divided into a number of districts, each of which elects a few candidates to represent the district in the given governing body. To achieve proportional representation in district elections which do not use a party-list system, most localities use the \emph{Single Transferable Vote (STV)}, a voting rule dating back to the 19th century \citep{Tide95a}. STV is generally considered to guarantee proportional representation since it satisfies a property known as \emph{proportionality for solid coalitions (PSC)} formulated by \citet{Dumm84a}. This property, in essence, guarantees that any sufficiently large group of voters that rank the same candidates (not necessarily in the same order) above all other candidates is given an amount of representation commensurate with its size. Such groups are referred to as \textit{solid coalitions}.

Despite the prominence of PSC in the literature, some have challenged the idea that PSC is a good way to capture  the notion of proportional representation rigorously. For instance, the requirement that all voters in a solid coalition must all share the exact same candidate set in their ballot prefix makes PSC highly non-robust \citep{Tide06a}. %
This issue has been considered from a purely theoretical point of view \citep{AzLe20a, BrPe23a}, as well as through experiments on synthetic data \citep{BrPe23a}. There exist several other properties that seek to guarantee proportional representation in ways similar to that of PSC  \citep{AEF+17a, BrPe23a}, and that, from the aforementioned perspective, are more robust \citep{BrPe23a}. However, such properties are purely qualitative, and like PSC they usually rely rigidly on a ``threshold of representation,'' making them blind to groups of voters that come close to being ``sufficiently large.''

While proportionality axioms have received much attention in the theoretical literature, empirical investigation of the force of axioms formulated to guarantee proportional representation has been limited due to a scarcity of real-world ballot data. However, \citet{McGr23a} recently compiled a real-world dataset consisting of 1100 Scottish local council elections from the period 2007--2022. In these elections, STV is used to elect members for local councils in Scotland. We observe that in this data, solid coalitions of sufficient size rarely occur. Consequently, for most elections in the dataset, PSC places no or few restrictions on the winning committee, and most outcomes are therefore proportional according to the property. Since PSC often undergirds the claim that STV is proportional, the toothlessness of PSC in practice raises the question of how proportional the method really is.\footnote{As demonstrated by \citet{BrPe23a}, STV fails proportionality axioms that are stronger than PSC.} More generally, we consider the issue of how to assess the proportionality of a committee when axioms have little to no effect on the outcome. As a first step towards answering this question, we define quantitative versions of several proportionality axioms suggested in the literature, based on the idea that we should be able to relax size-requirements imposed on groups of voters whenever there are few or no groups of sufficient size. This approach not only allows us to measure the extent to which an axiom is satisfied, but also makes it possible to strengthen axioms in situations where they have little to no effect. We furthermore use the data from the Scottish local council elections to  assess experimentally the proportionality of different voting rules according to our measures.

In practice, voters tend not to form cohesive groups large enough for standard proportionality axioms to place any significant requirement on the winning committee. The issues of ballot truncation and small cohesive groups have been discussed previously in \citet{BrPe23a, MaPl16a} and \citep{HKRW21a}, but such work had access to little to no real-world ballot data. Our contribution is that we investigate how to adapt proportionality axioms to such a real-world setting, and we provide empirical results from a large dataset of real-world STV elections.

\subsection*{Related Work}
\paragraph{Proportionality.} The formal study of proportional representation in multiwinner voting dates to the work of \citeauthor{Dumm84a}, whose proportionality for solid coalitions (PSC) criterion \citep{Dumm84a} was designed to formally capture the proportionality of STV. In recent years, inspired by \citeauthor{Dumm84a}, proportional representation has been studied as a fairness criterion in several different decision-making scenarios. Most prominently, starting with \citet{ABC+16a} and their notion of \emph{extended justified representation (EJR)} a wide body of work covers proportionality in approval-based multiwinner voting \citep{LaSk22a}. This has since been used as a basis to study proportionality in settings such as clustering \citep{CFL+19a, KePe23a, KKK24a, CMS24a}, sortition \citep{EbMi23a}, participatory budgeting \citep{PPS21a, BFL+23a}, or AI-augmented civic participation platforms \citep{FGP+23a}. 

\paragraph{Proportionality with Ordinal Preferences.} As mentioned previously, PSC was initially conceptualized by \citet{Dumm84a}.  Following this \citet{AzLe20a} introduced a generalization of PSC to weak-ordinal preferences (i.e., ordinal preferences with ties allowed) and introduced the \emph{expanding approvals rule} as an alternative to STV satisfying this generalization. In a recent work, \citet{DePe24a} studied generalizations of STV (and its single-winner variant) to weak-ordinal preferences and showed that the natural generalization of STV also satisfies the PSC generalization of \citeauthor{AzLe20a}. In follow-ups, \citet{AzLe21a} additionally generalized PSC to the setting of participatory budgeting with weak-ordinal preferences and introduced a second generalization of PSC they termed IPSC and derived a characterization of PSC through what they call a Dummett tree \citep{AzLe22a}.

Deviating from the use of solid coalitions, \citet{BrPe23a} introduced proportionality axioms for ordinal preferences, based on the proportionality axioms for approval preferences and argued that they are more robust than the original PSC based axioms. These axioms were further generalized to the setting of proportional clustering by \citet{KePe23a} and \citet{ALM23a}. In a different context, \citet{AEF+17a} and \citet{JMW20a} independently (among other things) study the property of local stability, a possible extension of Condorcet consistency to multiwinner voting. Both show that committees satisfying local stability might not exist in a given election. However, \citet{JMW20a} design an approximation algorithm, showing that a $(16+\varepsilon)$-approximation to local stability always exists. This approximation factor was recently improved by \citet{CLP+24a} to 9.8217. Locally stable committees additionally found application in the design of randomized voting rules with low distortion \citep{EKPS24a}. Improving the approximation constant is an open question and additionally has implications for the existence of so-called Condorcet winning sets \citep{ELS15a, CLP+24a}.

Finally, \citet{Jans18a} also studies this measure for evaluating the proportionality of voting rules, deriving the threshold values for several voting rules in elections with approval ballots, ordinal ballots, or party lists. Additionally, the concept of the proportionality degree \citep{Skow21a} is highly related, proving a very similar proportionality guarantee in the approval world. 

\paragraph{Experiments for Multiwinner Voting.} Next, we  highlight the related work on the experimental evaluation of (proportional) multiwinner voting rules. Closely related to us is the recent work of \citet{BBC+24a} who experimentally evaluate the use of proportional multiwinner voting rules in the Polkadot blockchain, based on real-world data from this blockchain. Further, in the participatory budgeting literature, several papers have analyzed and compared rules on real-world datasets \citep{BFJK23a, FFP+23a, BFJ+24b}. Besides this, most experimental papers have so far focused on synthetic data or data not necessarily collected for the purpose of multiwinner voting; see the survey by~\citet{BFJ+24x}. For instance, \citet{EFL+17a} evaluate ordinal multiwinner voting rules on randomly selected $2$-dimensional Euclidean instances.

\paragraph{Ballot truncation} We note that the issue of ballot truncation in real-world elections has received little attention in the multiwinner ranked-choice setting, mainly due to the lack of available data. The only study we are aware of which analyzes the effects of ballot truncation empirically is \citet{HKRW21a}, who only examine six city council elections from Cambridge, Massachusetts. Ballot truncation has received more attention in the single-winner instant runoff setting. For example, \citep{KGF20, TUK23a} study the effect on the election winner as we vary the amount of truncation, and \citep{BK15, GSM23} analyze how ballot exhaustion (when partial ballots run out of preferences before the final round) can cause the election winner not to secure a majority of total votes cast.

\section{Preliminaries}
\label{sec:prelims}

For $t \in \mathbb{N}$, let $[t] = \{1, \dots, t\}$. 
Throughout the paper, we assume that we are given a set $N = [n]$ of \emph{voters} and a set $C = \{c_1, \dots, c_m\}$ of \emph{candidates}. %
The preferences of voters are \emph{top-truncated}. That is, each voter $i \in N$ chooses a set $A_i \subseteq C$ of candidates  and a \emph{strict} ranking \mbox{$\succ_i \colon A_i \times A_i$} over these candidates, where $|A_i|$ may be less than $m$. For notational purposes, we assume that $c \succ_i c'$ for any $c \in A_i$ and $c' \notin A_i$, while voters are indifferent among candidates they do not rank. Finally, let $k$ denote the number of candidates that need to be selected. 
We refer to subset $W\subseteq C$ of candidates of size $|W|=k$ as a \textit{committee}.
A \emph{(multiwinner voting) instance $I$} is a collection of voters, candidates, voter preferences, and the committee size. %

\subsection{Proportionality Axioms} \label{sec:axioms}

For $\ell \in [k]$, we say that a group  $N' \subseteq N$ of voters is \emph{$\ell$-large} if $\lvert N'\rvert \ge \ell \frac{n}{k}$. Generally speaking, in proportional multiwinner voting if $N'$ is $\ell$-large and $N'$ is ``cohesive'' in some sense, then $\ell$ of the candidates supported by $N'$ should receive seats on the winning committee.

The most prominent proportionality axiom for ranked preferences is \emph{proportionality for solid coalitions (PSC)} introduced by \citet{Dumm84a}. Given a subset $N' \subseteq N$ of voters and $C' \subseteq C$ of candidates, $N'$ forms a \emph{solid coalition} over $C'$ if for any pair of candidates $c_j \in C'$ and $c_r \in C \setminus C'$, it holds that $c_j \succ_i c_r$ for all $i \in N'$. In other words, $C'$ forms a prefix of the ranking $\succ_i$ of every voter in $N'$. Using the notion of solid coalitions, we can now define PSC.\footnote{A related axiom is \emph{generalized PSC} \citep{AzLe20a}, which generalizes PSC to the case of weak orders. For top-truncated preferences, PSC and generalized PSC are equivalent.}

\begin{definition}[PSC]
    A committee $W$ satisfies \emph{proportionality for solid coalitions (PSC)}
    if for any $\ell \in [k]$ and any $\ell$-large group $N' \subseteq N$ of voters forming a solid coalition over \mbox{$C' \subseteq C$},
    it holds that 
    $\lvert C'\cap W\rvert \ge  \min(\lvert C'\rvert, \ell)$.
\end{definition}

As a potential alternative to PSC and possible generalization of the Condorcet principle to proportional representation, \citet{AEF+17a} introduced the concept of \emph{local stability}. Intuitively, local stability postulates that no group of voters of size at least $\frac{n}{k}$ should find an unselected candidate they all prefer to everyone in the committee.\footnote{The same concept was independently studied by \citet{JMW20a} in the more general context of core stability.} 
Notably, committees satisfying local stability need not exist.
\begin{definition}[LS] %
    A committee $W$ satisfies \emph{local stability (LS)} if there is no $1$-large group of voters $N' \subseteq N$  and candidate $c \notin W$ with 
    $c \succ_i c'$ for all ${i \in N'}$ and $c' \in W$.
\end{definition}

Observing that PSC places minimal restriction on the winning committee when confronted with instances in which few solid coalitions exist, \citet{BrPe23a} introduced ``rank-based'' axioms (based on axioms in approval-based multiwinner voting) that strengthen to conditions imposed by PSC.\footnote{Since we are only dealing with ordinal preferences in this paper, we omit the ``rank''-prefix used by \citet{BrPe23a}.} The strongest such axiom that can always be satisfied is PJR+.
Intuitively, if a group of voters deserving $\ell$ candidates all rank a candidate at most at rank $r$ and this candidate is not selected, then $\ell$ candidates ranked by someone in this group at rank $r$ or better should be included in the committee. 

\begin{definition}[PJR+ \citep{BrPe23a}]
    A committee $W$ satisfies \emph{PJR+} if there is no $\ell \in [k]$, $\ell$-large group $N'\subseteq N$ of voters, unselected candidate $c \notin W$, and rank $r \in [m]$ such that
    \begin{enumerate}[(i)]
        \item $\rank(i,c) \le r $ for all $i \in N'$
        \item $\lvert \{c' \colon \rank(i,c') \le r \text{ for some } i \in N'\} \cap W \rvert < \ell$.
    \end{enumerate}
\end{definition}
\citet{BrPe23a} were able to show that PJR+ can be verified in polynomial time via a reduction to submodular function minimization. This, however, is more of a theoretical result, as existing polynomial-time submodular function minimization are both difficult to implement and slow, taking upwards of $\Omega(n^{4})$ time to run \citep{Cor07a}. Therefore, we refrain from testing PJR+ in our instances and turn to some simpler to compute alternatives, also based on approval-based multiwinner voting. We start with the axiom EJR+ as defined by \citet{BrPe23a}. Note that EJR+ is not always be satisfiable.
\begin{definition}[EJR+] %
    A committee $W$ satisfies \emph{EJR+} if there is no $\ell \in [k]$, $\ell$-large group $N'\subseteq N$ of voters, unselected candidate $c \notin W$, and rank $r \in [m]$ such that
    \begin{itemize}
        \item[i)] $\rank(i,c) \le r $ for all $i \in N'$
        \item[ii)] $\lvert \{c' \colon \rank(i,c') \le r\} \cap W \rvert < \ell$ for all $i \in N'$.
    \end{itemize}
\end{definition}

As an alternative to EJR+ we take \emph{priceability} as defined by \citet{BrPe23a} and \citet{PeSk20a}. Priceability is always satisfiable, for instance by the expanding approvals rule, and in essence tries to compute a fractional matching between voters and candidates. 
\begin{definition}[Priceability]
    A committee $W$ is priceable if for each voter $i \in N$ there is a price function $p_i \colon C \to [0,1]$ and a price $p \in \mathbb{R}^+$ such that
    \begin{itemize}
        \item $\sum_{c \in C} p_i(c) \le 1 $ for all $i \in N$
        \item $\sum_{i \in N} p_i(c) \le p$ for each $c \in W$
        \item $\sum_{i \in N} p_i(c) = 0$ for each $c \notin W$
        \item $\sum_{i \in N\colon \rank(i,c) \le r} (1 - \sum_{c' \in W\colon \rank(i,c') > r} p_i(c')) \le p$ \\ \quad \quad for all $r \in [m]$.
    \end{itemize}
\end{definition}
We minimally deviate here from the original definition of \citet{PeSk20a} by requiring that at most $p$ can be paid for a candidate (instead of exactly $p$). This allows us to associate a price $p$ with any committee. In particular, a price $p \ge \frac{n}{k}$ is now also possible, which will allow us to use priceability to quantify the proportionality of ``disproportional'' committees as well. Using the same proof as \citet{BrPe23a} it follows that if a committee is priceable with a price $p < \frac{n}{k}$ the committee also satisfies PJR+.

\subsection{Voting Rules} \label{sec:rules}

A voting rule maps each instance to one or more \textit{winning committees}. 
We briefly introduce the voting rules we study. For more details, we refer to the full version of this paper.

The \emph{Single Transferable Vote (STV)} is a family of rules, with different versions of STV used in different jurisdictions \citep{Tide95a}. Following \citet{McGr23a}, we describe  the version that is used in Scottish local elections. 
STV proceeds in rounds and starts by assigning each voter $i \in N$ a weight $w_i = 1$. In each round, STV checks whether the candidate with the most weighted first-place votes has at least $q$ (weighted) first-place votes, where $q = \lfloor \frac{n}{k+1}\rfloor + 1$ is the \textit{quota}. If that is the case, this candidate is elected and the voters ranking this candidate first are reweighted proportionally such that the total weight of the election decreases by exactly $q$. If there is no such candidate, the candidate with the least weighted first-place votes is removed. Besides this version of STV, which we refer to as as \emph{Scottish STV}, we also consider \emph{Meek-STV} \citep{HWW87x}.

The \emph{Expanding Approvals Rule (EAR)} is another family of proportional multiwinner voting rules \citep{AzLe20a}. Just like STV, it uses a quota $q$ and starts by assigning each voter a weight $w_i = 1$. It then iterates over all possible ranks from first to last. For each such rank and so-far unselected candidate $c$ it checks whether the total weight of the voters giving a rank of at most $r$ to $c$ is at least~$q$. If there is such a candidate, it takes any such candidate into the committee and decreases the collective weight of the corresponding voters by~$q$. 
In our implementation of the rule we select the candidate with the largest total weight and decrease the weights as in Scottish STV. 
If there is no such candidate, $r$ is increased.\footnote{The treatment of unranked candidates in EAR allows for different interpretations, as noted in Remark 3 by \citet{AzLe20a}. When a voter ranks a strict subset $A_i$ of candidates, the set of unranked candidates $C \setminus A_i$ (i.e., the last equivalence class) can be included either \textit{(i)} as soon as the ranked candidates are exhausted, or \textit{(ii)} only in the final step of the method. We tested both variants in our experimental analysis. The performance differences between the two versions were minor, with variant \textit{(ii)} performing slightly better. Therefore, we focus on variant \textit{(ii)} here.}

The \emph{Single Non-Transferable Vote (SNTV)} selects the $k$ candidates with the highest first-place vote count. SNTV is sometimes referred to as \textit{$k$-plurality}.

\emph{Sequential Ranked-Choice Voting (seq-RCV)}, which is currently used in the US state of Utah \citep{MMLS24a}, executes the single-winner RCV procedure $k$ times. Single-winner RCV (a.k.a. \textit{instant runoff voting}) iteratively deletes the candidate with the fewest first-place votes until only a single candidate is left. 

\smallskip 

We include seq-RCV in the rules we examine because it is not proportional but \textit{majoritarian} (i.e., a group consisting of barely more than half of the electorate can force the entire committee to consist of their candidates) and therefore can provide context for our results around methods like STV. We note that across all 1070 multiwinner Scottish elections in our dataset there are only nine in which seq-RCV chooses a winner set incompatible with PSC. Thus, if PSC is the standard by which a rule is judged to be proportional, seq-RCV is virtually proportional in practice. However, seq-RCV often produces outcomes which are wildly non-proportional under any intuitive notion of ``proportional'' \citep{MMLS24a}, and this provides additional motivation for why we should explore alternatives to standard PSC in practice.

While SNTV does not satisfy any of the  axioms discussed in \Cref{sec:axioms}, it is considered a ``semi-proportional'' method \citep{Amy00x}. Such methods aim to give some representation to minorities, albeit not proportional to their support. %

The remaining three voting rules are proportional: both versions of STV satisfy PSC, and EAR satisfies saitsfies PJR+. Furthermore, it was shown by \citet{BrPe23a} that committees returned by EAR are always priceable. %

\begin{table}[tb]
    \centering
\begin{tabular}{@{}lrrr@{}}
\toprule
 & $<25\%$\phantom{xx} & $<50\%$\phantom{xx} & $<100\%$\phantom{xx} \\ 
\cmidrule(r){2-2}\cmidrule(lr){3-3}\cmidrule(lr){4-4} %
PSC & 42 (3.9\%) & 275 (25.7\%) & 776 (72.5\%)\\ %
EJR+ & 64 (6.0\%) & 357 (33.4\%) & 1005 (93.9\%) \\ %
LS & 173 (16.2\%) & 650 (60.7\%) & 1061 (99.1\%)  \\ %
Priceability & 61 (5.7\%) & 354 (33.1\%) & 1003 (93.7\%)  \\ 
\bottomrule
\end{tabular}%
    \caption{
    For each axiom, the row values correspond to the number of elections in the dataset where the satisfaction rate is strictly less than the percentage value at the top of the column. For instance, in only 42 out of 1070 elections it is the case that less than 25\% of all committees satisfy PSC.}
    \label{tab:committees-axioms}
\end{table}

\section{Scottish Local Council Elections}
\label{sec: data}
For the purposes of local governance, Scotland is partitioned into $32$ council areas, each of which is governed by a separate council. Each council area is divided into wards, and each ward elects a number of councilors to represent the ward on the council. The number of candidates running and the number of seats available in a typical election are not large; most elections satisfy $m \in \{6,7,8,9\}$ and $k \in \{3,4\}$. Since 2007, all wards have used Scottish STV to choose their representatives. Elections are held every five years. 

\citet{McGr23a} collected ballot data from 1100 Scottish local council elections between the years 2007 and 2022.\footnote{The data is available at  \url{https://github.com/mggg/scot-elex}.} 
Out of the 1100 elections, 1070  satisfy \mbox{$k > 1$}; we only consider these 1070 instances. Notably, voters are not required to provide full rankings and ballots are often heavily truncated: across the 5,485,379 total ballots cast from all elections, approximately 14\% rank only a single candidate, and a majority, 58\%, rank fewer than~$k$. In contrast, only 13\% of ballots are complete (where by ``complete'' we mean a ballot that contains  $m-1$ or $m$ candidates). We refer to \citet{McGr23a} for more details about the dataset.

We evaluated the force of proportionality axioms on the ballot data from the elections by calculating the number of outcomes excluded by the axioms on each instance. Out of 1070 elections, there are 294 ($27.5\%$) for which every committee of size $k$ satisfies PSC, 592 ($55.3\%$) where there is only one solid coalition which earns a single seat under PSC, and 184 ($17.2\%$)  where multiple solid coalitions are deserving of seats by PSC. Thus, in real-world elections PSC does not place significant restrictions on the winning committee. The other axioms we consider are generally more discerning than PSC, however none of the axioms identify a unique outcome on any of the elections we consider. We give an example to illustrate both how PSC may fall short in excluding outcomes and how the number of compatible committees may differ between the axioms. 
An overview of the number of outcomes that satisfy each of the axioms considered over all elections in the dataset can be found in \Cref{tab:committees-axioms}.

\begin{example}
Consider the 2012 council election of Midlothian, ward 2, with $n = 5132$ voters, $m=7$ candidates, and $k=3$ seats. The  candidates, their party affiliations, and their first-place vote counts are listed %
in \Cref{tab:election_results_midl}.
\begin{table}[h!]
\centering
\begin{tabular}{llr}
\toprule
\textbf{Candidate} & \textbf{Party} & \textbf{First-Place Votes} \\
\midrule
D. Milligan (\textbf{DM})& Labour & 1,574 \\
L. Milliken (\textbf{LM})& Labour & 525 \\
J. Aitchison (\textbf{JA}) & Independent & 382 \\
B. Constable (\textbf{BC}) & SNP & 1,257 \\
T. Munro (\textbf{TM}) & SNP & 358 \\
I. Baxter (\textbf{IB}) & Greens & 671 \\
E. Cummings (\textbf{EC}) & Conservative & 365 \\
\bottomrule
\end{tabular}
\caption{Candidates and vote counts for \Cref{ex:mloth20122}.}
\label{tab:election_results_midl}
\end{table}

There are $\binom{m}{k} = \binom{7}{3} = 35$ possible outcomes in this election. The largest (non-trivial) solid coalition is over the two Labour candidates \textbf{DM} and \textbf{LM} and has size $1624$. This coalition consists of the $1218$ voters who cast a ballot of the form $\textbf{DM}\succ \textbf{LM} \succ \dots$ and of the $406$ voters who cast a ballot of the form $\textbf{LM} \succ \textbf{DM} \succ \dots$. Interestingly, the size of this coalition is much smaller than the total number of voters who ranked a Labour candidate first. The reasons are that some Labour voters rank only one candidate on their ballots and  many voters  cast split-ticket ballots. The next largest solid coalition has size $1277$ and is over the two SNP candidates \textbf{BC} and \textbf{TM}, again barely more than the first-place votes of \textbf{BC}. The largest solid coalition over more than two candidates consists of $554$ voters who support the two SNP candidates as well as \textbf{IB}. Solid coalitions over four or more candidates are extremely small. The threshold for a coalition to be $1$-large is $\lceil \frac{n}{k} \rceil = 1711$, and therefore any of the 35 possible winning committees satisfies PSC. In comparison, $24$ out of $35$ committees satisfy rank-EJR+ and priceability, while $12$ out of $35$ outcomes are locally stable. 
\label{ex:mloth20122}
\end{example}

\section{Quantifying Proportionality}
\label{sec:qProp}

In this section, we turn proportionality axioms into quantitative proportionality measures. 
The main idea behind the construction of measures consists in 
(1) defining a parameterized version of the proportionality axiom by introducing a multiplicative factor on the size constraint, and 
(2) identifying the smallest parameter for which the parameterized axiom is satisfied by the given committee. 
We start with PSC and consider other axioms in \Cref{sec:qOther}.

\subsection{Quantifying PSC}
\label{sec:qPSC}

Recall that a group $N'\subseteq N$ of voters is called $\ell$-large if $|N'|\ge \ell \frac{n}k$. 
A parameterized version of this notion can be obtained by introducing a multiplicative factor.
\begin{definition}
    Consider an instance with $n$ voters and committee size $k$.
    For $\alpha \in \reals^+$ and $\ell \in [k]$, a group $N'\subseteq N$ of voters is \emph{$\ell_\alpha$-large} if $|N'|\ge \alpha \cdot \ell \frac{n}k$.
\end{definition}
That is, the value of the parameter $\alpha$ changes the size requirement a group needs to fulfil in order to be deemed worthy of $\ell$ representatives. If $\alpha<1$, then the size constraint is relaxed, as groups of size smaller than $\ell \frac{n}k$ deserve $\ell$ representatives. On the other hand, if $\alpha>1$, then a group of voters must have larger size to deserve representation. PSC requires that $\ell$-large groups need to be represented appropriately; consequently, replacing ``$\ell$-large'' with ``$\ell_\alpha$-large'' in the definition of PSC makes the axiom more demanding for $\alpha<1$ and less demanding for $\alpha>1$.

\begin{definition}[$\alpha$-PSC] \label{def:aPSC}
    Let $\alpha \in \reals^+$. 
    A committee $W$ satisfies \emph{$\alpha$-PSC}
    if for any $\ell \in [k]$ and any $\ell_\alpha$-large subset $N' \subseteq N$ of voters forming a solid coalition over $C' \subseteq C$,
    it holds that 
$\lvert C' \cap W\rvert \ge \min\left(\lvert C'\rvert,\ell\right).$
\end{definition}

\newcommand{\apsc}{\alpha_\text{PSC}}

Clearly, $1$-PSC is equivalent to (original) PSC and lowering the value of $\alpha$ makes the axiom more demanding: If $\alpha_1 \le \alpha_2$, then $\alpha_1$-PSC implies $\alpha_2$-PSC. 
For a given committee $W$, we are therefore interested in the smallest value of $\alpha$ such that $W$ satisfies $\alpha$-PSC. Formally, let\footnote{We use infimum rather than minimum since the set of~values for which $\alpha$-PSC holds is an open interval of the form $(\apsc,+\infty)$.}
\[ \apsc(W) = \inf \{\alpha\colon  W \text{ satisfies $\alpha$-PSC} \}.\]
We refer to $\apsc(W)$ as the \textit{PSC value of $W$}.
Observe that $W$ satisfies PSC if and only if $\apsc(W)<1$.

Furthermore, we call the minimum achievable $\alpha$-value for an instance $I$ the \textit{PSC value of $I$} and denote it with
\[\apsc^*(I) = \min_{W\subseteq C\colon |W|=k} \apsc(W).\]

Choosing a winning committee which achieves $\apsc^*(I)$ might be normatively desirable in a proportional context because such a committee fulfills the spirit of PSC in the absence of large solid coalitions.

\begin{example}
    Consider again the instance from \Cref{ex:mloth20122}. %
    Here, $1624$ voters form a solid coalition over $\{\textbf{DM},\textbf{LM}\}$, followed by solid coalitions of size $1574$, $1277$, $1257$, $671$, over $\{\textbf{DM}\}$, $\{\textbf{BC},\textbf{TM}\}$, $\{\textbf{BC}\}$, and $\{\textbf{IB}\}$, respectively. The $\alpha$-values at which the solid coalitions become $\ell_\alpha$-large for $\ell \in \{1,2\}$ are given in the table below. %
\[
\begin{tabular}{ c  c c c c c }
\toprule
    $\ell$ & $\{\textbf{DM},\textbf{LM}\}$ & $\{\textbf{DM}\}$ & $\{\textbf{BC},\textbf{TM}\}$ & $\{\textbf{BC}\}$ & $\{\textbf{IB}\}$ \\
\midrule
    $1$ & 0.949 & 0.920 & 0.746 & 0.734 & 0.392 \\
    $2$ & 0.474 & -- & 0.373 & -- & -- \\
\bottomrule
\end{tabular}
\label{tab:coalitions}
\]

Consider the committees $W = \{\textbf{DM}, \textbf{LM}, \textbf{BC}\}$ and $W' = \{\textbf{DM}, \textbf{BC}, \textbf{IB}\}$. Committee $W$ is chosen by Meek-STV and EAR, while $W'$ is chosen by Scottish STV. Both committees satisfy PSC, however we can distinguish the committees based on their PSC values: $\apsc(W) = 0.392$ and $\apsc(W') = 0.474$. In fact, $W$ achieves the PSC value for the  instance (i.e., $\apsc^*(I)= 0.392$), as any smaller  value would additionally force candidate $\textbf{IB}$ to be included. From the point of view of solid coalitions, $W$ is perhaps the better choice of winning committee because the solid coalition $\{\textbf{DM},\textbf{LM}\}$ is more than double the size of any solid coalition containing $\textbf{IB}$.

\end{example}

\label{sec:compPSC}

\subsection{Computing the PSC Value of a Committee}
\label{sec:comp-apscW}

As already noted, decreasing the value of $\alpha$ leads to more representation demands by solid coalitions. 
We can identify exactly the values of $\alpha$ that makes a group $\ell_\alpha$-deserving: 
\[N' \text{ is $\ell_\alpha$-large} \quad \Leftrightarrow \quad \alpha \le \frac{|N'|}{n} \cdot \frac{k}{\ell}.\]
Let $\alpha^\ell_{(N',C')} = \frac{|N'|}{n} \cdot \frac{k}{\ell}$ denote the value of $\alpha$ for which the solid coalition $(N',C')$ becomes $\ell_\alpha$-large. 
Then, the values
\[\alpha^1_{(N',C')}, \, \alpha^2_{(N',C')}, \, \dots \,, \, \alpha^{|C'|}_{(N',C')}\]
are exactly the thresholds of $\alpha$-values for which the group starts to become deserving of $1, 2, \dots, |C'|$ many representatives under $\alpha$-PSC. (Values $\alpha^\ell_{(N',C')}$ with $\ell > |C'|$ are irrelevant because the group's deservingness is upper bounded by $|C'|$ according to the definition of $\alpha$-PSC.)

\newcommand{\soli}{\mathcal{S}}

In our algorithm for computing the PSC value of a committee, we  compute these values for all solid coalitions.
For each subset $C'\subseteq C$, there is a unique \textit{maximal} group $N_{C'}$ of voters that solidly supports $C'$. The group $N_{C'}$ consists of all voters ranking all candidates in $C'$ over all other candidates. Clearly, it is sufficient to consider only maximal solid coalitions. 
Let $\soli$ denote the set of all maximal solid coalitions.
It is not hard to see that $|\soli|$ is polynomial in the size of the profile and that we can efficiently enumerate all maximal solid coalitions by iterating over the prefixes of the voters.

Given the set $\soli$ of all maximal solid coalitions, we can now collect all threshold values for $\alpha$. 
Define $T$ as the set that contains the relevant values for each solid coalition, i.e.,
\[T \, = \bigcup_{(N',C') \in \soli} \{\alpha^1_{(N',C')}, \alpha^2_{(N',C')}, \dots, \alpha^{|C'|}_{(N',C')}\}\text.\]
Here, each threshold value $\alpha^\ell_{(N',C')}$ is associated with a {``PSC constraint''} of the form 
$|W \cap C'| \ge \ell$.

\begin{restatable}{thm}{psc-W}
Given an instance and a committee $W$, the PSC value of $W$ can be computed in polynomial time. 
\label{thm:psc-W}
\end{restatable}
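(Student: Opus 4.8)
The plan is to turn the computation of $\apsc(W)$ into a finite search over the threshold set $T$ already constructed in the paper. First I would observe that, as $\alpha$ increases, the set of $\ell_\alpha$-large solid coalitions shrinks monotonically; hence the set of PSC constraints active at parameter $\alpha$ changes only at the finitely many values in $T$, and between consecutive elements of $T$ the axiom $\alpha$-PSC is either satisfied or violated uniformly. Consequently $\apsc(W)$ equals the infimum of those $\alpha$ at which \emph{all} currently active constraints are satisfied by $W$, and this infimum is attained as the supremum of the ``bad'' thresholds. Concretely, I would define, for each maximal solid coalition $(N',C') \in \soli$ and each $\ell \in \{1,\dots,|C'|\}$, the threshold $\alpha^\ell_{(N',C')} = \frac{|N'|}{n}\cdot\frac{k}{\ell}$ together with its associated constraint $|W \cap C'| \ge \ell$, and then set
\[
\apsc(W) \;=\; \max\Bigl(\,0,\ \max\bigl\{\, \alpha^\ell_{(N',C')} \;:\; (N',C')\in\soli,\ \ell\le|C'|,\ |W\cap C'| < \min(|C'|,\ell)\,\bigr\}\Bigr),
\]
with the convention that the inner maximum over an empty set is $0$ (or, more precisely, one may take it to be any value below the smallest positive threshold, reflecting that $\apsc(W)$ can be arbitrarily small).

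**Correctness argument.** The key step is to prove that this formula is correct. For one direction, if $\alpha$ strictly exceeds every threshold $\alpha^\ell_{(N',C')}$ whose constraint $W$ violates, then every solid coalition that is $\ell_\alpha$-large has $|W\cap C'|\ge\min(|C'|,\ell)$ — because $\ell_\alpha$-largeness means $\alpha \le \alpha^\ell_{(N',C')}$, so such a pair cannot be among the violated ones — hence $W$ satisfies $\alpha$-PSC; this shows $\apsc(W)$ is at most the claimed value. For the converse, if $W$ violates the constraint for some $(N',C')$ and $\ell$, then for every $\alpha \le \alpha^\ell_{(N',C')}$ the coalition $(N',C')$ is $\ell_\alpha$-large while $|W\cap C'|<\min(|C'|,\ell)$, so $W$ fails $\alpha$-PSC; taking the maximum over all violated constraints shows $\apsc(W)$ is at least the claimed value, and also explains why the infimum in the definition of $\apsc(W)$ is not attained (the value set is the open interval above this maximum). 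It remains to note that restricting attention to \emph{maximal} solid coalitions loses nothing: if $N'$ solidly supports $C'$ then $N' \subseteq N_{C'}$, so $\alpha^\ell_{(N',C')} \le \alpha^\ell_{(N_{C'},C')}$, and the constraint is identical; thus any violation witnessed by a non-maximal coalition is also witnessed (at a weakly larger threshold) by the maximal one.

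**Running time.** Finally I would bound the running time. The paper already asserts that $|\soli|$ is polynomial and that $\soli$ can be enumerated efficiently by scanning voter prefixes; iterating over the (at most $m$) ballot prefixes and counting supporting voters takes $O(nm)$ time overall, giving $|\soli| = O(nm)$ maximal coalitions, each with $|C'|\le m$ associated thresholds, for a total of $O(nm^2)$ threshold–constraint pairs. For each pair we compute $|W\cap C'|$ in $O(m)$ time and compare it against $\min(|C'|,\ell)$, then take a maximum; this is $O(nm^3)$ arithmetic operations in the worst case, clearly polynomial.

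**Main obstacle.** I expect the only genuinely delicate point to be the handling of the infimum-versus-minimum subtlety flagged in the paper's footnote: one has to argue carefully that the set of valid $\alpha$ is exactly the open interval $(\apsc(W),\infty)$ — in particular that $\apsc(W)$ itself is \emph{not} a valid parameter whenever any constraint is violated — and that when no constraint is ever violated the value is genuinely $0$ (or unbounded below among positive reals), so that the ``$\max(0,\cdot)$'' in the formula is a faithful stand-in. Everything else is bookkeeping over the finite, polynomially-sized set $T$.
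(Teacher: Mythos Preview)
Your proposal is correct and takes essentially the same approach as the paper: both compute $\apsc(W)$ as the largest threshold $\alpha^\ell_{(N',C')}\in T$ whose associated constraint $|W\cap C'|\ge \ell$ is violated by $W$, with the paper phrasing this as a scan of $T$ in non-increasing order that stops at the first violation, and you phrasing it as a closed-form maximum over violated pairs. Your write-up is more detailed (two-sided correctness argument, explicit running-time bound, justification for restricting to maximal coalitions, and the infimum subtlety), but the underlying algorithm is identical.
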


\begin{proof}
First calculate the set $\soli$ of all maximal solid coalitions and the set $T$ of relevant thresholds. Consider the threshold values in $T$ in non-increasing order. 
When considering $\alpha^\ell_{(N',C')}$, check whether $|W \cap C'| \ge \ell$. If yes, go to the next threshold value. If not, we know that $\apsc(W) = \alpha^\ell_{(N',C')}$, because $\alpha^\ell_{(N',C')}$ is the largest value of $\alpha$ for which the corresponding PSC constraint is not satisfied by $W$.  
\end{proof}

\subsection{Computing the PSC Value of an Instance}
\label{sec:comp-apscI}

Computing the minimal possible $\alpha$-value that is achievable in an instance by \textit{any} committee is more challenging. We first show that the problem is NP-complete.%

\begin{restatable}{thm}{hardness}
Given an instance and a value $\alpha<1$, deciding whether $\alpha$-PSC is satisfiable is NP-complete.
\label{thm:hardness}
\end{restatable}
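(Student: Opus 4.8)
The plan is to establish NP-completeness by showing membership in NP (easy) and NP-hardness via a reduction from a known hard problem. Membership is immediate: given a candidate committee $W$ of size $k$, we can verify in polynomial time whether $\alpha$-PSC holds by using the algorithm of \Cref{thm:psc-W} (compute $\apsc(W)$ and check that it is at most $\alpha$; strictly speaking one checks all PSC constraints whose threshold exceeds $\alpha$). So $W$ is a valid certificate.

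For hardness, I would reduce from a set-covering / hitting-set style problem, since the structure ``choose $k$ candidates so that every constraint $|W \cap C'| \ge \ell$ with threshold $> \alpha$ is satisfied'' is exactly a covering requirement. The natural target is to reduce from something like \textsc{Set Cover} (or \textsc{Vertex Cover}, or \textsc{Exact Cover by 3-Sets}). Given a set-cover instance with universe $U$, a family $\mathcal{F}$ of subsets, and a budget $t$, the idea is to build an election where (i) there is one ``element candidate'' $u_e$ for each $e \in U$ and one ``set candidate'' $c_S$ for each $S \in \mathcal{F}$; (ii) for each element $e$, a block of voters whose ballot prefix consists solidly of the set candidates $\{c_S : e \in S\}$, engineered so that this solid coalition is $1_\alpha$-large and hence forces $W$ to contain at least one set candidate covering $e$; (iii) committee size $k$ and vote counts tuned so that $W$ can include at most $t$ set candidates, with the remaining $k - t$ seats forced onto ``filler'' candidates that carry huge solid coalitions of their own (so they must be picked, using up the budget). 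The filler gadget is what pins down $k - t$ of the seats; the element gadgets then must be satisfied using only $t$ set candidates, which is possible iff the original instance has a cover of size $t$. One has to choose $n$, the block sizes, and $\alpha$ (a fixed rational just below $1$, e.g. so that a block of size exactly $n/k$ is $1_\alpha$-large but slightly smaller blocks are not) so that all the intended solid coalitions cross the threshold and no unintended ones do; in particular the ``element candidates'' $u_e$ should appear only deep in ballots so they never form threatening coalitions, and can serve as additional fillers to pad committee size without affecting constraints.

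The main obstacle I expect is controlling \emph{which} solid coalitions become $\ell_\alpha$-large: because solid coalitions are determined by common ballot prefixes, any set of voters sharing a prefix generates a coalition, and intersections/unions of the element-blocks can create spurious coalitions over larger candidate sets that impose unwanted constraints (or, conversely, a coalition over a 2-element set needing $2$ representatives). I would handle this by making the element blocks pairwise ``prefix-incompatible'' — e.g. order the set candidates within each block's prefix differently, or interleave a unique dummy candidate so that no two element blocks share a common prefix of length $\ge 2$ — thereby ensuring the only binding constraints are the intended single-seat requirements from each element block plus the filler requirements. A secondary technical point is verifying the ``only if'' direction cleanly: if $\alpha$-PSC is satisfiable, the committee must use all $k-t$ filler seats (forced by their large coalitions) and cover every element with the remaining $t$ set candidates, yielding a set cover of size $\le t$; I would argue this by showing any filler candidate's coalition has threshold strictly above $\alpha$ while bounding the total number of non-filler seats.

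Once the gadget is fixed, correctness is routine: the element-block thresholds and filler-block thresholds all exceed $\alpha$ by construction, so $\alpha$-PSC forces exactly a set cover; conversely a size-$t$ cover yields a committee meeting every PSC constraint. Combined with NP membership, this gives NP-completeness. I would remark that the construction can be made to use only top-truncated (indeed fairly short) ballots and a constant committee size gadget if desired, matching the regime of the Scottish data, though that refinement is not needed for the theorem as stated.
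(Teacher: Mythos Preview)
Your proposal is correct in outline and would work if carried out carefully: NP membership via \Cref{thm:psc-W}, hardness via a covering-style reduction with filler candidates absorbing $k-t$ seats and per-element blocks forcing at least one cover candidate each. The obstacles you flag (spurious solid coalitions from shared prefixes, and controlling $\ell_\alpha$-largeness) are the right ones.

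The paper's proof follows the same blueprint but is considerably cleaner, and the simplifications are worth noting. It reduces from 3-\textsc{Hitting Set} and sets $\alpha = k/n$, so that the threshold $\alpha \cdot n/k$ equals $1$: a \emph{single} voter is already $1_\alpha$-large, and any pair of voters is $2_\alpha$-large. For each set $S_i=\{s_i^1,s_i^2,s_i^3\}$ there are just two voters, both ranking a dedicated dummy $d_i$ first and then the three elements of $S_i$ in two cyclically shifted orders. This makes the dummies serve \emph{both} of your purposes at once: they are the ``fillers'' (each $d_i$ is forced into $W$ by the singleton coalition, consuming $j$ of the $k=h+j$ seats) \emph{and} the prefix-incompatibility device (distinct pairs have different first candidates, so no cross-pair solid coalition ever forms). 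The only nontrivial coalition within a pair is over the full 4-set $\{d_i\}\cup S_i$, which being $2_\alpha$-large forces one further candidate from $S_i$. Thus there is no need for large voter blocks, no delicate sizing to keep blocks from becoming $2_\alpha$-large, and no separate gadget for prefix incompatibility. Your approach buys nothing extra and costs the engineering you anticipate; the paper's choice of $\alpha=k/n$ and dummies-at-rank-one collapses all of that.
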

\begin{proof} 

Membership in NP follows from \Cref{thm:psc-W}, as any $W$ with $\apsc(W)\le \alpha$ witnesses the satisfiability of $\alpha$-PSC.

To show hardness, we reduce from 3-Hitting Set. Recall that in the 3-Hitting Set problem, we are given a set $S$, a collection of size-three subsets $S_1, ..., S_j$ of $S$ and an integer $h<j$, and the goal is to find a set $H \subseteq S$ of size $|H|=h$ such that $H \cap S_i \neq \emptyset$ for all $i \in [j]$. For each $i \in [j]$, let  $S_i=\{s^1_i, s^2_i, s^3_i\}$. For a given instance $(S, \{S_1, \dots, S_j\}, h)$, we construct a corresponding election instance %
as follows. 

Let $C = S \cup D$, where $D = \{d_1, \dots, d_j\}$ is a set of $j$ dummy candidates. 
For each $i \in [j]$, there are two voters $v_i^1,v_i^2$ with $A_i = S_i \cup \{d_i\}$ and preferences
\[
    v_i^1: \, d_i \succ s^1_i \succ s^2_i \succ s^3_i \quad \text{and} \quad
    v_i^2: \, d_i \succ s^2_i \succ s^3_i \succ s^1_i \, .
\]
Thus, $n=2j$. Finally, we let $k = h + j$ and $\alpha = \frac{k}{n}<1$. 

Since $\alpha \frac{n}{k}=1$, every individual voter on its own constitutes a $1_\alpha$-large solid coalition %
and, for each $i \in [j]$, the voters $v_i^1$ and $v_i^2$ together form a $2_\alpha$-large solid coalition over the prefix $\{d_i\} \cup S_i$. 
Hence, a committee satisfying $\alpha$-PSC must contain all dummy candidates and at least one candidate from each of the sets $S_1, \dots, S_j$. It follows that there is a committee of size $k$ that satisfies $\alpha$-PSC if and only if there exists a hitting set of size $h$.

This proof can easily be extended to any $\alpha < 1$ by either cloning the voters or adding new sets consisting only of a single otherwise unused element. 
\end{proof}

Thus, in order to compute PSC values in our experiments (see \Cref{sec:exp}), we employ integer linear programming (ILP).
The approach is similar to the one used in \Cref{sec:comp-apscW}: We compute the set $T$ of threshold values and then consider these values in non-increasing order. When considering $\alpha^\ell_{(N',C')}$, we add the constraint $|W \cap C'| \ge \ell$ to our ILP and check whether the resulting ILP is feasible. If yes, we consider the next threshold in $T$. If not, we have found the PSC value of the instance, as $\alpha^\ell_{(N',C')}$ is the largest value of $\alpha$ for which $\alpha$-PSC is not satisfiable. 

Formally, the ILP has a binary variable $x_c \in \{0,1\}$ for each candidate $c \in C$ and a constraint $\sum_{c \in C} x_c \le k$ ensuring that at most $k$ candidates are selected. Constraints of the form $|W \cap C'| \ge \ell$ can be encoded as $\sum_{c \in C'} x_c \ge \ell$.

We remark that this algorithm has similarities to the \textit{D'Hondt apportionment method} \citep{BaYo82a}. In the full version of this paper, we develop a description of this algorithm which gives rise to the idea of ``{apportionment for non-disjoint parties},'' which might be of independent interest.

\subsection{Quantifying Other Axioms} 
\label{sec:qOther}

Generalizing the quantification approach to local stability and EJR+ is straightforward. Similarly to PSC, we can replace each $\ell$-large group by an $\ell_\alpha$-large group leading to the following two definitions. 
\begin{definition}[$\alpha$-LS]
    A committee $W$ satisfies \emph{$\alpha$-local stability ($\alpha$-LS)} if there is no $1_\alpha$-large group $N' \subseteq N$ of voters and $c \notin W$ such that 
    $c \succ_i c'$ for all $i \in N'$ and~$c' \in W$.
\end{definition}
While $1$-LS might not be achievable, \citet{CLP+24a} recently showed that every instance admits a $9.8217$-LS committee.
For the definition of $\alpha$-EJR+, we let $\rank(i,c) = \lvert \{c' \in A_i: c' \succ_i c\}| + 1$ denote the rank that voter $i$ assigns to candidate $c$. If $c \notin A_i$, we let $\rank(i,c)= m$. 

\begin{definition}[$\alpha$-EJR+]
    A committee $W$ satisfies \emph{$\alpha$-EJR+} if there is no $\ell \in [k]$, $\ell_\alpha$-large group $N'\subseteq N$ of voters, unselected candidate $c \notin W$, and rank $r \in [m]$ such that
    \begin{enumerate}[(i)]
        \item $\rank(i,c) \le r $ for all $i \in N'$
        \item $\lvert \{c' \in C \colon \rank(i,c') \le r\} \cap W \rvert < \ell$ for all $i \in N'$.
    \end{enumerate}
\end{definition}

The definition of priceability %
is already parameterized, with a price of $p < \frac{n}{k}$ implying PSC \citep{BrPe23a}.
It is easy to generalize this implication to show that if the lowest possible price is $p$, the corresponding committee satisfies $p \frac{n}{k}$-PSC. Thus, we say that a committee satisfies \emph{$\alpha$-priceability} if the smallest price~$p$ for which the committee is priceable %
satisfies $p \le \alpha\frac{n}{k}$.

For all three notions, the minimal $\alpha$-value achieved by a given committee can be computed in polynomial time. 
For local stability and EJR+, it is sufficient to iterate over the unchosen candidates and compare the size of the coalitions that would want to deviate to these candidates. 
The optimal price for priceability can be computed via a linear program. 

We note that it is already NP-complete to decide whether any locally stable committee exists \cite{AEF+17a}. Further, the construction in the proof of \Cref{thm:hardness} also applies to both priceability and EJR+, showing that computing the minimal $\alpha$-value for these two measures is also NP-hard.

\section{Experimental Results}
\label{sec:exp}
To assess the measures defined in Section \ref{sec:qProp}, we conducted several experiments on the 1070 election instances from the dataset discussed in \Cref{sec: data}. We highlight some of our results in this section, mostly focusing on PSC; all remaining results can be found in the full version of this paper.

\begin{table}[tb]
\centering
\begin{tabular}{@{}lcccc@{}} 
\toprule 
  & M-STV & EAR  & SNTV  & seq-RCV \\ %
 \midrule
S-STV  & 108 (10.1\%) & 262 (24.5\%) & 277 (25.9\%) & 485 (45.3\%) \\ %
M-STV   & -- & 230 (21.5\%) & 333 (31.1\%) & 415 (38.8\%) \\ %
EAR  &  & -- & 452 (42.2\%) & 459 (42.9\%) \\ %
SNTV &  &  & -- & 599 (56.0\%) \\ %
\bottomrule
\end{tabular}%
    \caption{Number of instances on which the rules disagree.}
    \label{tab:rules-num-disagreement}
\end{table}

We considered the following voting rules:
Scottish STV (\textit{S-STV}),
Meek STV (\textit{M-STV}), 
EAR,
SNTV, and 
seq-RCV. 
\Cref{tab:rules-num-disagreement} shows how often these rules disagree with each other (i.e., choose different committees) on our data. We observe that S-STV and M-STV agree very frequently, but not in all elections. Further, both STV variants agree with SNTV in nearly $70\%$ of the elections, i.e., in most elections both STV variants simply select the $k$ candidates with the most first-place votes. This is slightly less for EAR, which agrees with SNTV in only $58\%$ of the elections. Further, seq-RCV seems to be the rule that is most different from the other rules, agreeing with SNTV in only $45\%$ of the cases.

\paragraph{Minimal Values}
For each instance, we computed the optimal $\alpha$-value for each measure (see \Cref{fig:min-psc-ejr-hist} for histograms of values for PSC and EJR+).
For all measures, the majority of minimal $\alpha$-values lie roughly in the range $0.4$ to $0.6$, the PSC values overall being somewhat lower than for the other measures (EJR+ and priceability in particular). Interestingly, we observe that while a $1$-LS or $1$-EJR+ committee is not guaranteed to exist in general, they always exist for the instances in our dataset. This observation is similar in spirit to the observation that Condorcet winners almost always exist in real-world elections \citep{McMc24a}. 

\begin{figure}[tb]
    \centering
    \includegraphics[width=0.9\linewidth]{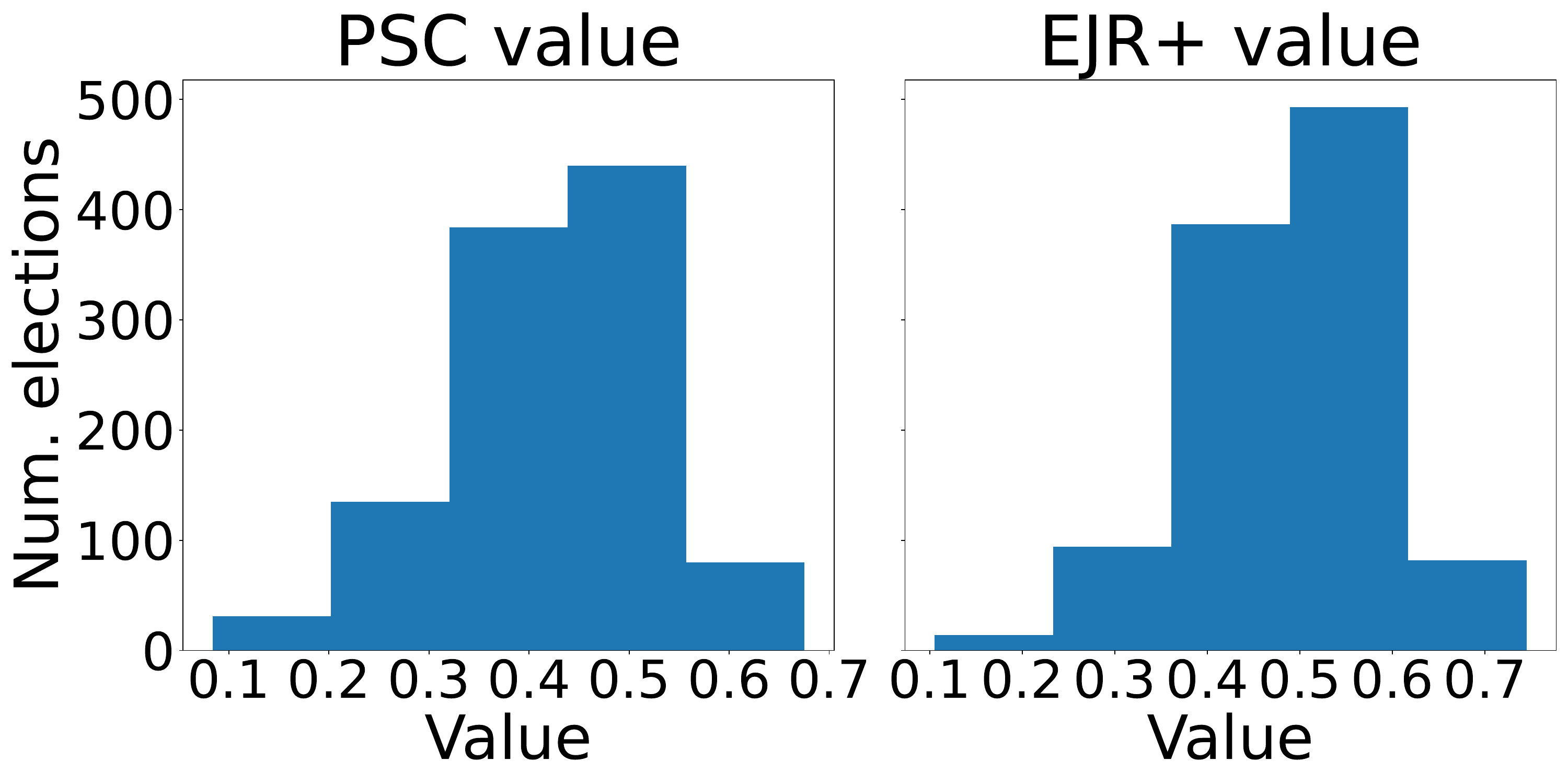}
    \caption{Histograms of PSC values and EJR+ values achievable in our elections, rounded to one decimal place.}
    \label{fig:min-psc-ejr-hist}
\end{figure}

\paragraph{Distance from Optimality}
For each voting rule, 
we counted 
\textit{(i)} how often the rule achieves the optimal $\alpha$-value and 
\textit{(ii)} the average distance between the committees chosen by the voting rule and the committees optimizing the $\alpha$-value.\footnote{For \textit{(ii)}, we define the distance between two committees as half of their symmetric difference.}
The results, presented in \Cref{tab:optimal-ptage-avg-dist}, reveal which voting rules are ``most aligned'' with each of the four measures. In particular, SNTV is most aligned with the PSC and LS measures, and the STV rules are most aligned with the EJR+ and priceability measures. The strong performance of SNTV can be considered surprising insofar as the rule does not satisfy any proportionality guarantees. A possible explanation for the good values achieved by SNTV (which outperforms EAR according to all four measures) can be found in the structure of our data: often, most of the constraints that a quantified proportionality axiom like $\alpha$-PSC imposes involve top-ranked candidates only, and SNTV\,---\,by definition\,---\,selects the candidates with the most first-place votes. 

\begin{table}[t]
    \centering
\begin{tabular}{@{}lcccccccc@{}}
\toprule
& \multicolumn{2}{c}{PSC} & \multicolumn{2}{c}{EJR+} & \multicolumn{2}{c}{Priceability} & \multicolumn{2}{c}{LS}\\
\cmidrule(r){2-3}\cmidrule(lr){4-5}\cmidrule(lr){6-7}\cmidrule(lr){8-9}
 & opt. & dist. & opt. & dist. & opt. & dist. & opt. & dist. \\
 \midrule
S-STV & 856 & 0.20 & 826 & 0.23 & \textbf{870} & \textbf{0.19} & 829 & 0.23 \\ %
M-STV & 819 & 0.24 & \textbf{842} &  \textbf{0.22} & 840 & 0.22 & 754 & 0.30  \\ %
EAR & 677 & 0.38 & 737 & 0.33 & 709 & 0.35 & 656 & 0.40 \\ %
SNTV & \textbf{901} & \textbf{0.16} & 752 & 0.30 & 832 & 0.23 & \textbf{935} & \textbf{0.13} \\ 
seq-RCV & 552 & 0.50 & 646 & 0.40 & 586 & 0.46 & 459 & 0.60 \\ 
\bottomrule
\end{tabular}%
    \caption{For each rule and each axiom, \textit{(i)} ``opt.'' refers to the number of instances for which the rule achieves the optimal $\alpha$-value and \textit{(ii)} ``dist.'' refers to the average distance between the outcome of the rule and the outcome with optimal $\alpha$-value (measured in terms of number of candidates that need to be exchanged). The best values in each column appear in bold.}
    \label{tab:optimal-ptage-avg-dist}
\end{table}

\begin{figure}[tb]
    \centering
    \includegraphics[width=\linewidth]{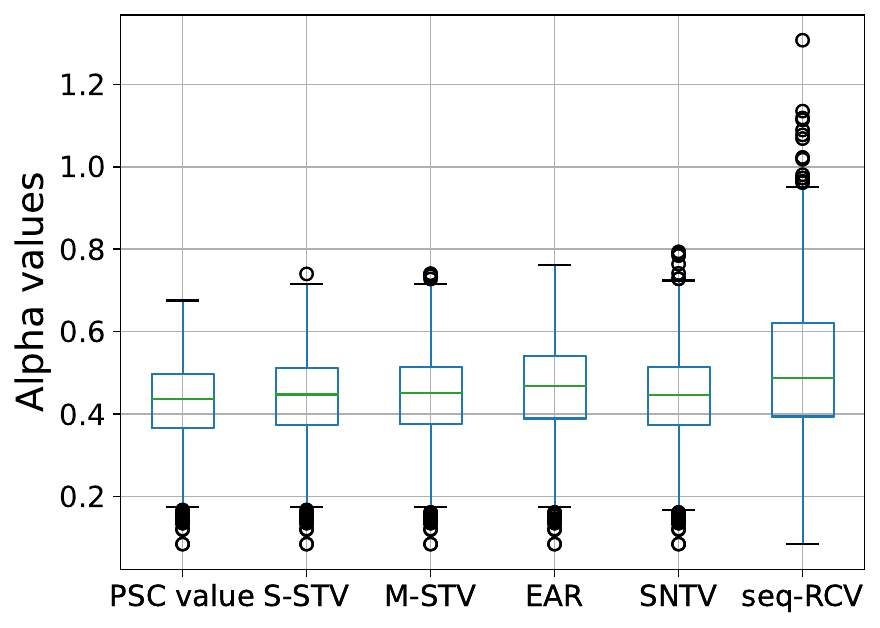}
    \caption{The PSC values achieved by voting rules, together with optimal PSC values (shown in the leftmost column).
    }
    \label{fig:alphas-psc-boxplot-rules}
\end{figure}

\paragraph{Values Achieved by Rules}
We furthermore computed the spread of $\alpha$-values achieved by different voting rules over the set of all instances and compared these values to the spread of optimal $\alpha$-values.  
For PSC values, the results are presented as a box plot in \Cref{fig:alphas-psc-boxplot-rules}.
Overall, all proportional rules\,---\,and the semi-proportional SNTV\,---\,perform similarly in terms of approximating optimal values, with the range of values for each of the rules coming close to those of the optimal values.\footnote{
The outlier value at $\alpha \approx 0.08$ 
stems from the 2012 election of North Lanarkshire, Ward 9, %
where $3$ out of $4$ candidates needed to be elected.
In this election, all rules and measures choose the same committee and the only unselected candidate is greatly unpopular.
}
Somewhat surprisingly, EAR\,---\,the rule satisfying the strongest proportionality axioms (see \Cref{sec:rules})\,---\,does slightly worse than the other proportional rules. (This is also apparent in \Cref{tab:optimal-ptage-avg-dist}.) 

Furthermore, SNTV does slightly better than the other rules w.r.t. PSC values (and the same is true for LS). A reason for that, as already discussed in the context of \Cref{tab:optimal-ptage-avg-dist}, is that the measures often require the $k$ most popular candidates to be chosen: on average, 57\% of the constraints corresponding to the optimal PSC value are over singleton sets of candidates, and thus correspond directly to first-place votes.

\paragraph{Pairwise Comparisons}
Finally, we considered pairwise comparisons of voting rules w.r.t. the $\alpha$-values they achieve. In these comparisons, we only consider instances on which the two rules under consideration output different committees. We focus on two comparisons w.r.t. PSC values: S-STV vs.\ seq-RCV and  
 S-STV vs.\ EAR (\Cref{fig:stv-rcv-ear-diff-side}). 
In both of these cases, S-STV does better in terms of PSC values. However, as one might expect, the overall difference in values between S-STV and the non-proportional seq-RCV is much more pronounced than the difference between S-STV and EAR. In particular, seq-RCV fails $1$-PSC
in 9 instances.

    \label{fig:stv-ear-diff}

\begin{figure}[t]
    \centering
    \includegraphics[width=\linewidth]{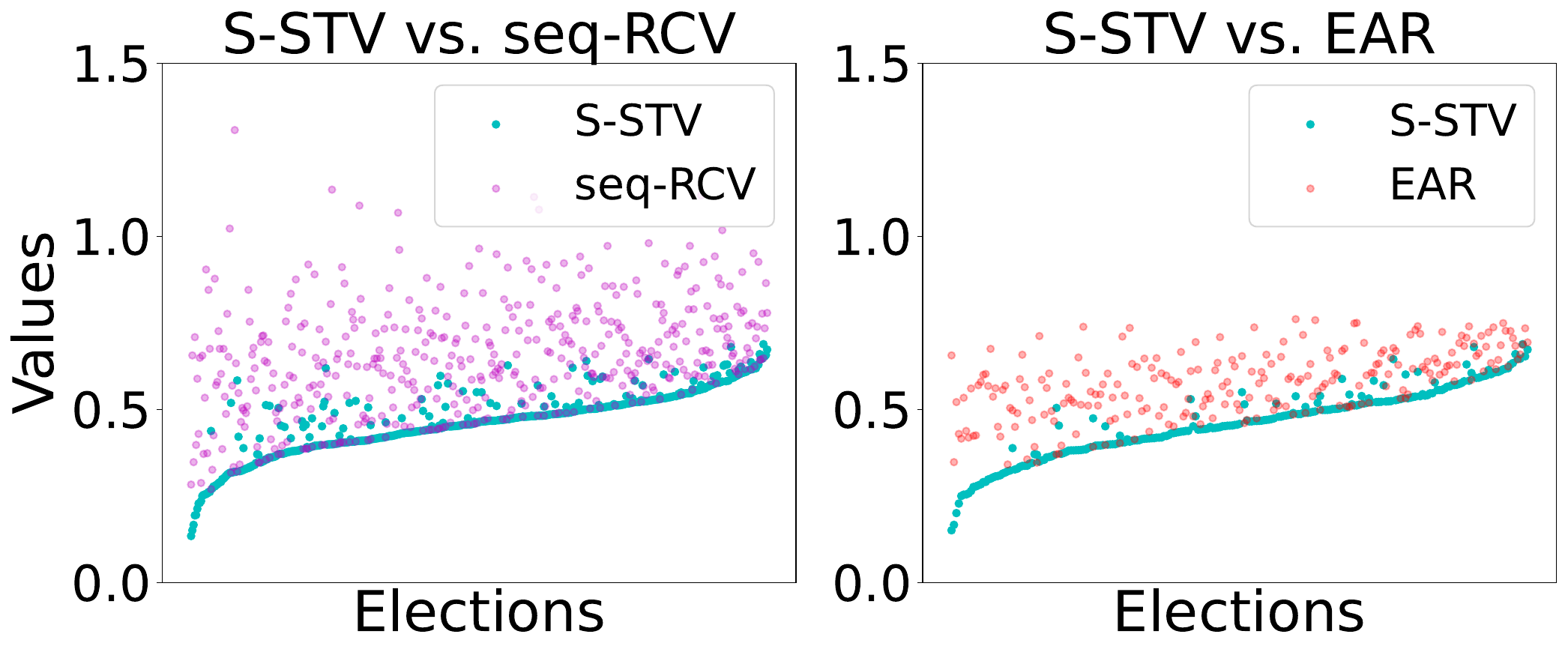}
    \caption{\textit{Left:} PSC values achieved by Scottish STV and seq-RCV for the 485 elections where the rules disagree. \textit{Right:} PSC values achieved by Scottish STV and EAR over the 262 elections where the rules disagree. Elections are ordered by increasing optimal PSC value.}
    \label{fig:stv-rcv-ear-diff-side}
\end{figure}

\section{The Effect of Ballot Truncation}
\label{sec:truncation}

To understand how ballot truncation affects our results, we created ballot data with complete rankings based on the Scottish data. To create this synthetic data, we employed an iterative process that is described in the full version of this paper. Basically, when extending partial ballots of length $r$ to length $r+1$, we consider the frequency of ballots of length at least $r+1$ which agree on the first $r$ entries. 

We then reran all experiments from \Cref{sec:exp} for the 1070 synthetic election instances with complete rankings. Overall, the results for completed instances are very similar to the results for the original (truncated) instances. For instance, in $21.8\%$ of these elections any committee of size $k$ is compatible with PSC (compared to $27.5\%$ in the truncated case; see \Cref{tab:complete}). This implies that the effect of ballot truncation is rather limited for the elections we study. This is a bit of a surprise, as these results suggest that the primary reason PSC has little discriminatory power in real-world elections is \textit{not} that voters truncate their ballots; rather, even if preferences are completed, voters do not form sufficiently large cohesive groups. 

In general, as expected, the achieved $\alpha$-values are slightly larger in the completed instances, with, for instance, SNTV also sometimes violating EJR+. Further, we observe that in most instances $\frac{k}{k+1}$ is a lower bound on the lowest possible priceability value, and that most priceability values achieved by the rules are clustered around that threshold. 

With completed preferences, seq-RCV violates PSC in 55 elections and achieves $\alpha$-values of up to $1.6$ for local stability. This suggests that non-proportional methods become even more noticeably non-proportional when preferences are not truncated.

\begin{table}
    \centering
\begin{tabular}{@{}lrrr@{}}
\toprule
 & $<25\%$\phantom{xx} & $<50\%$\phantom{xx} & $<100\%$\phantom{xx} \\ 
\cmidrule(r){2-2}\cmidrule(lr){3-3}\cmidrule(lr){4-4} %
PSC & 49 (4.6\%) & 305 (28.5\%) & 837 (78.2\%) \\ %
EJR+ & 81 (7.5\%) & 430 (40.2\%) & 1030 (96.3\%) \\ %
LS & 240 (22.4\%) & 749 (70.0\%) & 1066 (99.6\%)  \\ %
Priceability & 115 (10.7\%) & 503 (47.0\%) & 1036 (96.8\%)  \\ 
\bottomrule
\end{tabular}%
\caption{Analog of \Cref{tab:committees-axioms} for completed ballots.} 
    \label{tab:complete}
\end{table}

\section{Conclusion}

Given the absence of large cohesive groups of voters in real-world political elections, we proposed adaptations of several established proportionality axioms in which we loosen size constraints of cohesive groups, thereby creating new ways to quantify proportionality in practice.  Our results show that while delivering separations in theory, in practice these proportionality measures seem to behave similarly for proportional rules.  A majoritarian method like seq-RCV, on the other hand, performs poorly w.r.t. our measures. 
We also found that SNTV, a very simple rule without proportionality guarantees, performs well in practice in most cases. 
This study is a first attempt to grapple with the meaning of empirical proportionality using a large real-world dataset.

There are multiple ways to build upon our work. First, one could try to reconcile theory and practice by coming up with an axiomatic explanation for the performance of STV that goes beyond PSC. Second, our results motivate the search for proportionality axioms (or measures) that are better suited for assessing the real-world performance of voting rules. Finally, it would be interesting to obtain ballot data from some of the various other jurisdictions that use STV and to check whether those elections exhibit the same effects that we observed in the Scottish election dataset. 

\clearpage
\section*{Acknowledgments}
This work was partially supported by the Singapore Ministry of Education under grant number MOE-T2EP20221-0001.
\bibliographystyle{abbrvnat}
\bibliography{abb,algo,bibliography}

\newpage

\appendix

\section*{Technical Appendix}

The appendix is structured as follows. \Cref{app:data-ex-1} lists all solid coalitions of the election instance from \Cref{ex:mloth20122}. In \Cref{app:apportionment} we discuss connections between the algorithm described in \Cref{sec:comp-apscI} and apportionment methods. \Cref{app:experiments} provides more details on the experiments discussed in \Cref{sec:exp}. Finally, \Cref{app:completed} is dedicated to experiments on completed ballot data (see \Cref{sec:truncation}).

\section{Additional Data for \Cref{ex:mloth20122}}
\label{app:data-ex-1}

Consider again the instance discussed in \Cref{ex:mloth20122}. Here, we use capital letters A, B, C, ... to denote candidates; the mapping of candidate names to letters is as follows. 

\begin{table}[h!]
\centering
\begin{tabular}{llrr}
\toprule
\textbf{Candidate} & \textbf{Party} & \textbf{Votes} & \textbf{Letter} \\
\midrule
D. Milligan (\textbf{DM})& Labour & 1,574 & E\\
L. Milliken (\textbf{LM})& Labour & 525 & F\\
J. Aitchison (\textbf{JA}) & Independent & 382 & A \\
B. Constable (\textbf{BC}) & SNP & 1,257 & C\\
T. Munro (\textbf{TM}) & SNP & 358 & G\\
I. Baxter (\textbf{IB}) & Greens & 671 & B\\
E. Cummings (\textbf{EC}) & Conservative & 365 & D\\
\bottomrule
\end{tabular}
\end{table}

The following list contains all 125 maximal solid coalitions for this instance.
Each solid coalition $(N',C')$ appears in the format ($C'$:~$|N'|$), and the list is ordered by the size $|N'|$ of coalitions. 

\medskip

(EF:~1624), (E:~1574), (CG:~1277), (C:~1257), (B:~671), (BCG:~554), (F:~525), (ABCDEFG:~460), (BEF:~405), (A:~382), (D:~365), (G:~358), (AEF:~345), (CEFG:~292), (BC:~239), (AE:~228), (BD:~228), (CEF:~216), (CEG:~212), (ABCEFG:~200), (BCEFG:~197), (CE:~167), (BE:~159), (AB:~158), (BCDEFG:~156), (EFG:~140), (ACG:~132), (DEF:~117), (CFG:~107), (ABCG:~107), (ABEF:~107), (BF:~102), (ABD:~101), (CDG:~82), (AC:~79), (ACEFG:~78), (BCDG:~77), (AD:~73), (ABDEF:~72), (ABE:~68), (BDEF:~65), (BCE:~64), (CDEFG:~64), (BCEG:~63), (DE:~62), (EG:~62), (ABC:~61), (ABCDG:~61), (BG:~59), (ABCDEF:~53), (BCEF:~49), (ABF:~48), (BCD:~47), (CF:~45), (AF:~44), (ABCEF:~42), (ABCDEG:~42), (CD:~40), (ACE:~40), (ACDEFG:~40), (BDE:~39), (ABCEG:~35), (ABCDFG:~34), (BCFG:~33), (FG:~29), (BCF:~29), (BDF:~29), (DF:~28), (ACEF:~28), (ACEG:~28), (BEFG:~27), (ABCFG:~25), (BCDEF:~25), (ABEFG:~23), (ABDE:~22), (AG:~21), (AEFG:~21), (BCDEG:~21), (ABCE:~20), (ADEF:~20), (ABG:~18), (BDG:~18), (BCDFG:~17), (BEG:~16), (BFG:~15), (ACDG:~15), (AEG:~14), (\mbox{ABDEFG}:~14), (ADE:~13), (ACFG:~13), (DG:~12), (ABCD:~12), (CDEF:~12), (ACD:~10), (ADF:~9), (ABDF:~9), (ABFG:~9), (CDEG:~9), (ABCDE:~8), (ACF:~7), (CDF:~7), (BCDE:~7), (CDFG:~7), (DEFG:~7), (ADG:~6), (CDE:~6), (ABEG:~5), (ACDEF:~5), (BDEFG:~5), (AFG:~4), (DEG:~4), (ABCF:~4), (BCDF:~4), (ABCDF:~4), (ACDEG:~4), (ACDFG:~4), (ADEFG:~4), (DFG:~3), (ACDE:~3), (ABDG:~2), (ACDF:~2), (ADEG:~2), (BDFG:~2), (BDEG:~1).

\section{An Apportionment Perspective}
\label{app:apportionment}

The algorithm used to compute the PSC value of an instance (see \Cref{sec:comp-apscI}) has similarities to the \textit{D'Hondt apportionment method} \citep{BaYo82a}. Apportionment methods distribute parliamentary seats among parties based on the parties' vote counts in a party-list election. In particular, the D'Hondt method can be computed by (1) constructing a table that contains the vote counts of all parties, the vote counts divided by 2, the vote counts divides by 3, and so on; and (2) iteratively assigning a seat to the party corresponding to the next-highest number in the table, until all $k$ seats have been assigned.\footnote{For example, see \citet[page~362]{BLS18a}.}

The ILP-based algorithm in \Cref{sec:comp-apscI} can be phrased in a similar way, based on the observation that the threshold values
$\alpha^\ell_{(N',C')} = \frac{|N'|}{n} \cdot \frac{k}{\ell}$
are proportional to $\frac{|N'|}{\ell}$. Therefore, we could interpret the maximal solid coalitions as `parties' and construct a table of the parties' sizes, 
the parties' sizes divided by 2, and so on. In particular, the numbers corresponding to a solid coalition $(N',C')$ are $|N'|, \frac{|N'|}2, \frac{|N'|}3, \dots, \frac{|N'|}{|C'|}$.
Then, we can iterate over the numbers in this table in non-increasing order, just like the D'Hondt method would do. 

There are two main differences between the algorithm from \Cref{sec:comp-apscI} and D'Hondt's method. 
First, our algorithm allocates ``representation guarantees'' instead of seats: When considering  $\frac{|N'|}{\ell}$ corresponding to solid coalition $(N',C')$, the constraint $|W \cap C'| \ge \ell$ is added. 
Second, we may give out more than $k$ representation guarantees: 
Since the candidate sets of different solid coalitions are not necessarily disjoint, candidates can satisfy more than one solid coalition at once. We only stop allocating further representation guarantees if doing so would lead to an infeasible collection of guarantees (as determined by our ILP).

This alternative description of the algorithm for computing the PSC value of an instance gives rise to the idea of \textit{apportionment for non-disjoint parties}. Interestingly, any divisor method can be employed instead of the D'Hondt method. For example, the \textit{Sainte-Laguë} method (aka Webster), which satisfies attractive properties \citep{BaYo80a}, uses divisors $1,3,5,7,\dots$ instead of $1,2,3,4,\dots$.

\section{Detailed Analysis of Experiments}
\label{app:experiments}

\subsection{Agreement of Rules}
\label{app:exp-rules}
To complement the results on how often rules disagree, we calculated the average distance between each pair of rules. The results are given in \Cref{tab:rules-avgdist}. The pairwise distance between rules is low overall. As we have seen in \Cref{tab:rules-num-disagreement}, except for SNTV and seq-RCV, each pair agrees in a majority of cases. Moreover, the committees returned by the different rules often differ by only one candidate when they disagree. 

\begin{table}
    \centering
\begin{tabular}{@{}lccccc@{}} 
\toprule 
 & S-STV & M-STV & EAR  & SNTV  & seq-RCV \\ %
 \midrule
S-STV & 0 & 0.10 & 0.25 & 0.26 & 0.46 \\ %
M-STV & - & 0 & 0.22 & 0.31 & 0.39  \\ %
EAR & - & - & 0 & 0.44 & 0.44 \\ %
SNTV & - & - & - & 0 & 0.58  \\ %
seq-RCV & - & - & - & - & 0 \\
\bottomrule
\end{tabular}
    \caption{Average distance between pairs of rules.}
    \label{tab:rules-avgdist}
\end{table}

\subsection{Alignment of Measures}
In order to check how closely aligned our four measures are, we calculated, for each election, the optimal committee w.r.t. each measure (i.e., the committee minimizing $\alpha$). 
\Cref{tab:ax-avgdists} shows how much the committees optimizing different measures differ from each other. The difference between the committees optimizing the LS value and committees optimizing other measures is overall higher than the difference between the other measures. One possible explanation for this is that the LS measure is mostly restricted to only consider first-place candidates, while the other measures impose constraints more broadly. 

\begin{table}
    \centering
\begin{tabular}{@{}lccc@{}} 
\toprule 
 & {EJR+} & Priceability  & LS\\ %
 \midrule
PSC  & 193 (0.18) & \phantom{1}91 (0.08) & 254 (0.24) \\ %
EJR+  & -- & 137 (0.12) & 346 (0.33) \\ %
Priceability  &  & -- & 275 (0.26) \\ %

\bottomrule
\end{tabular}
    \caption{Number of elections where measure-minimizing committees disagree and average distance between those committees (in parenthesis).
}
    \label{tab:ax-avgdists}
\end{table}

\subsection{Local Stability and Priceability Values}
\label{app:exp-values}
The optimal LS and priceability values are given in \Cref{fig:min-ls-price-hist}. As previously mentioned, since no LS values exceeds 1, there exists a locally stable committee in each election in our data set. 

\begin{figure}
    \centering
    \includegraphics[width=\linewidth]{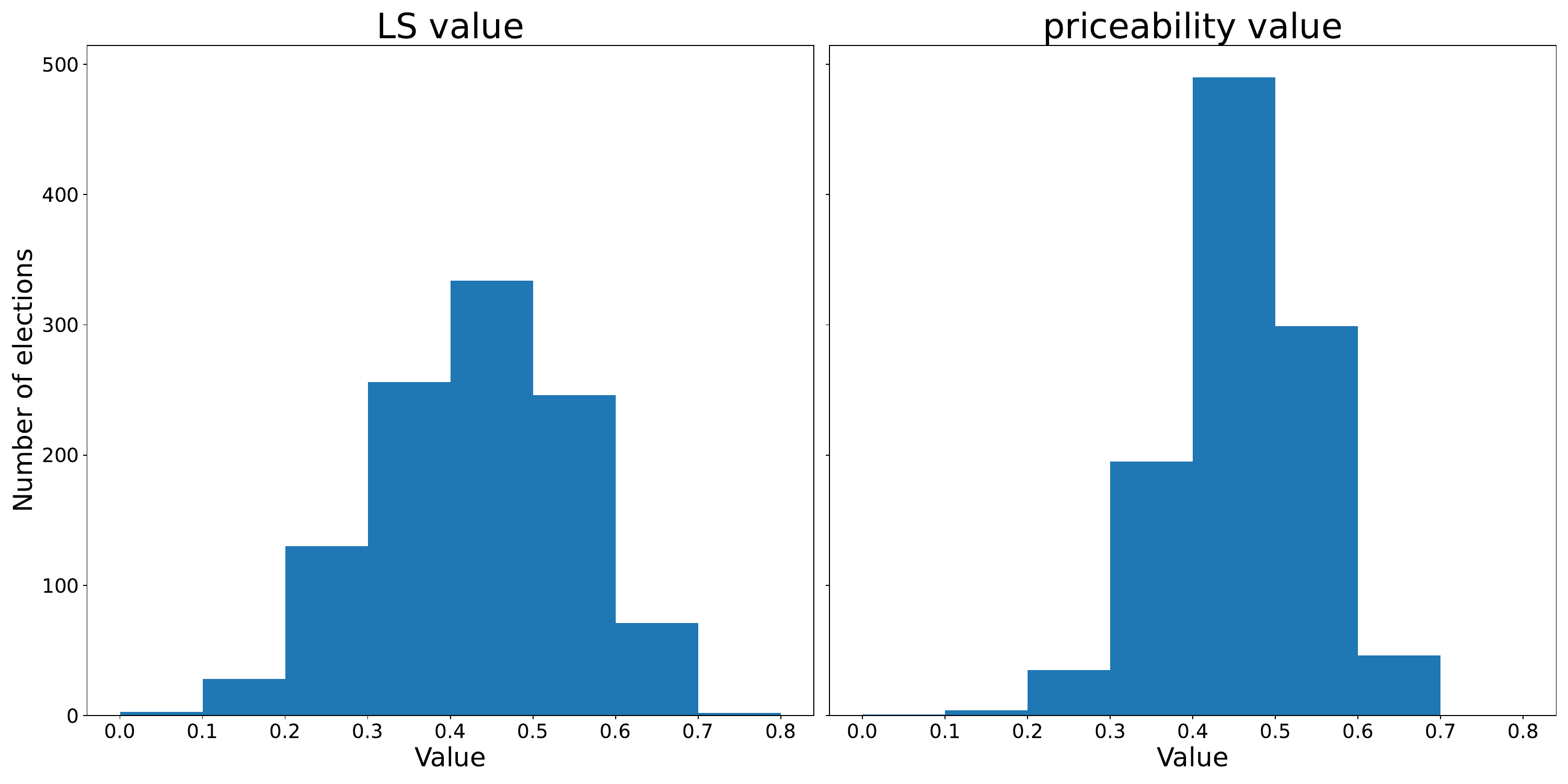}
    \caption{Histograms of LS and priceability values achievable in our elections, rounded to one decimal place.}
    \label{fig:min-ls-price-hist}
\end{figure}

\subsection{Values Achieved by Rules}
\label{app:exp-boxplots}
The EJR+, LS, and priceability values achieved by rules are given in \Cref{fig:alphas-ejr-boxplot-rules} to \Cref{fig:alphas-price-boxplot-rules}.

\begin{figure}
    \centering
    \includegraphics[width=\linewidth]{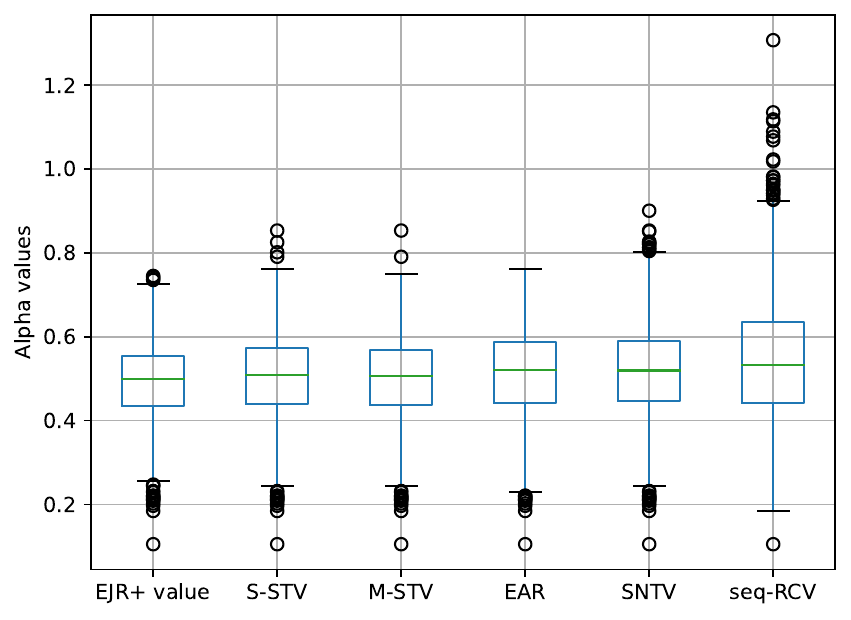}
    \caption{The EJR+ values achieved by voting rules}
    \label{fig:alphas-ejr-boxplot-rules}
\end{figure}

\begin{figure}
    \centering
    \includegraphics[width=\linewidth]{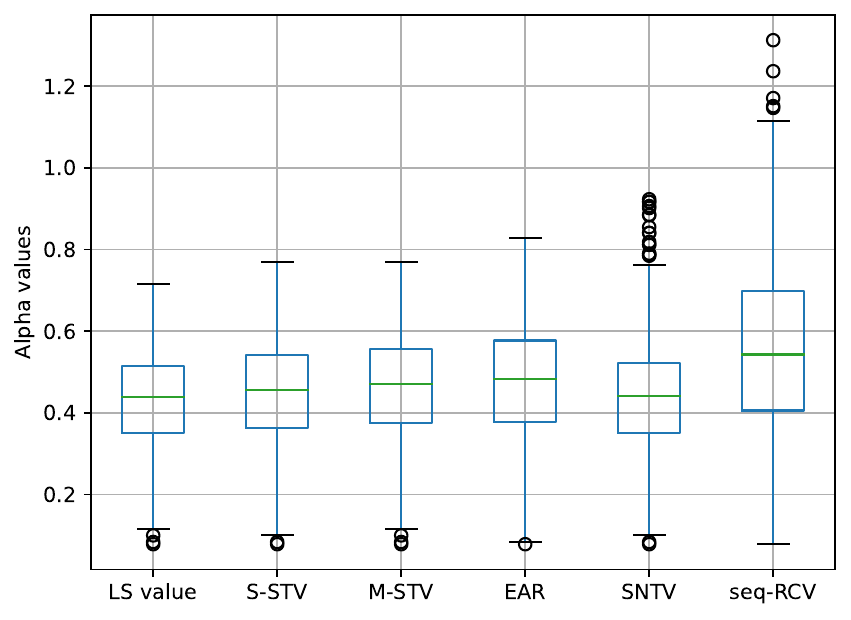}
    \caption{The LS values achieved by voting rules}
    \label{fig:alphas-ls-boxplot-rules}
\end{figure}

\begin{figure}
    \centering
    \includegraphics[width=\linewidth]{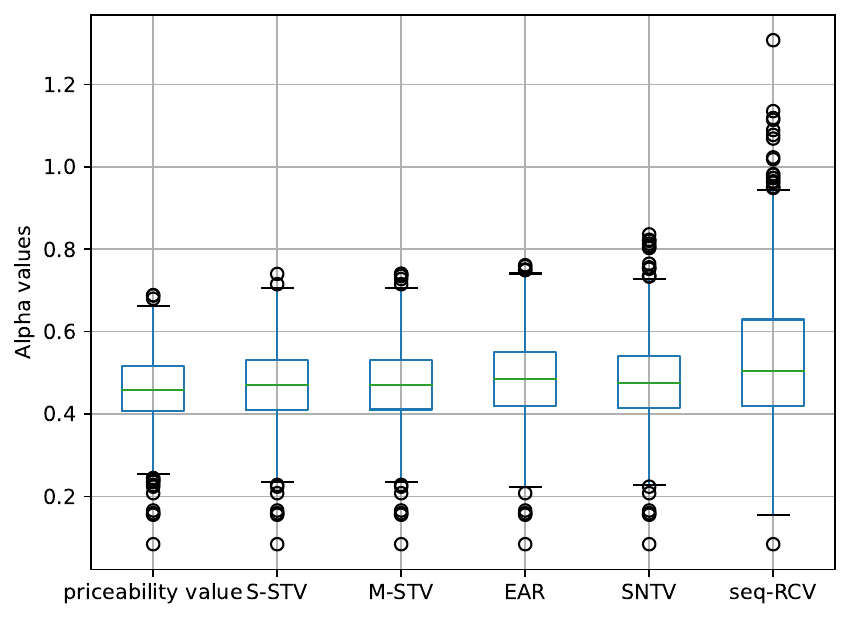}
    \caption{The priceability values achieved by voting rules}
    \label{fig:alphas-price-boxplot-rules}
\end{figure}

\clearpage
\section{Experimental Results for Completed Ballots}
\label{app:completed}
This section contains the results of experiments with completed ballot data, generated based on the Scottish data set. We first describe the method that we used to complete the data in Section~\ref{app:compgen}. As mentioned in Section 6, completing the data had limited effect on the experiments we performed. In Section~\ref{app:compex} we give an example to illustrate why the method described in Section~\ref{app:compgen} may not necessarily yield elections with (many) more sufficiently large solid coalitions. We give the results of the experiments on completed data in Sections~\ref{app:comp-hists-and-agreement} to \ref{app:pw}.

\subsection{Generating Completed Ballots}
\label{app:compgen}
We used the following method to create this synthetic data. The process involves a series of steps, each time extending partial ballots of length $r$ to length $r+1$ based on the probability distribution of  ballots which agree on the first $r$ entries. 

To be more concrete, suppose there are 10 ballots of the form $ABC$. To increase these ballots by length 1, we consider all ballots of the form $ABC*$ which have length at least 4. Suppose there are   38 such ballots as shown in Table \ref{completion}.  We extend the ballots of the form $ABC$ proportionally, (column 3) and round to a whole number using Hamilton's apportionment method (column 4). 

\begin{table}[tbh]
\scalebox{0.975}{
\begin{tabular}{lccc}
\toprule
Ballot&Number & Prop.  & Num. ballots\\
\midrule
ABCD & 9 &$\phantom{1}(9/38)\cdot 10 =2.368$ & 2 \\
ABCE & 12 & $(12/38)\cdot 10 = 3.158$& 3\\
ABCF& 17& $(17/38)\cdot 10 =4.474 $& 5\\
\midrule
Total & 38 & 10 & 10\\
\bottomrule
\end{tabular}
}
\caption{Extending ballots of length 3 to length 4.}
\label{completion}

\end{table}

In theory, this process can be iterated to create complete ballots, assuming a sufficient set of complete ballots. However, in practice it does not make sense to keep extending preferences on a given ballot if the number of ballots of length at least $r+1$ is not sufficiently large in comparison to the number of ballots of length $r$. Thus, each ballot was extended until either the ballot contained a complete ordering of the candidates, or until the number of ballots of length $r+1$ was less than  10\% of the number of ballots of length~$r$. 

If after this process, a ballot was not complete, we then completed the ballot by choosing between the candidates not yet listed on the ballot uniformly at random.

\subsection{Example Instance}
\label{app:compex}
As an example of why the number of large cohesive groups in elections does not increase substantially when moving from truncated data to data with full preferences, consider the 2022 election of Glasgow, Ward 20 (see \url{https://en.wikipedia.org/wiki/2022_Glasgow_City_Council_election#Baillieston}). %
In this election with $k = 3$, the two Labour candidates in the real instance with truncated ballots have $38.4\%$ of the first-place votes and form a solid coalition consisting of $0.313\%$ of the electorate (and thus too small to deserve representation under PSC). Furthermore, out of the voters that rank either of the Labour candidates first, $2.75\%$ only vote for this one candidate. Our completion method assigns the other Labour candidate the second place in most of these ballots, but not all of them, yielding a new solid coalition over these candidates that consists of $32.4\%$ of the electorate\,---\,barely below the $\frac{n}3$ threshold imposed by PSC. 

\subsection{Minimal Values}
\label{app:comp-hists-and-agreement}
We computed the optimal $\alpha$-value for each measure on the completed data. The histograms of encountered values is shown in \Cref{fig:hist-complete}. The minimal $\alpha$-values become slightly higher for all measures. The difference in values is most pronounced for EJR+ and priceability. In particular, the prcieability value for most instances is lower bounded by $\frac{k}{k+1}$: essentially, for lower values the entire electorate would be able to afford more than $k$ candidates at rank $m$, hence violating priceability. 

\begin{figure}
    \centering
    \includegraphics[width=.9\linewidth]{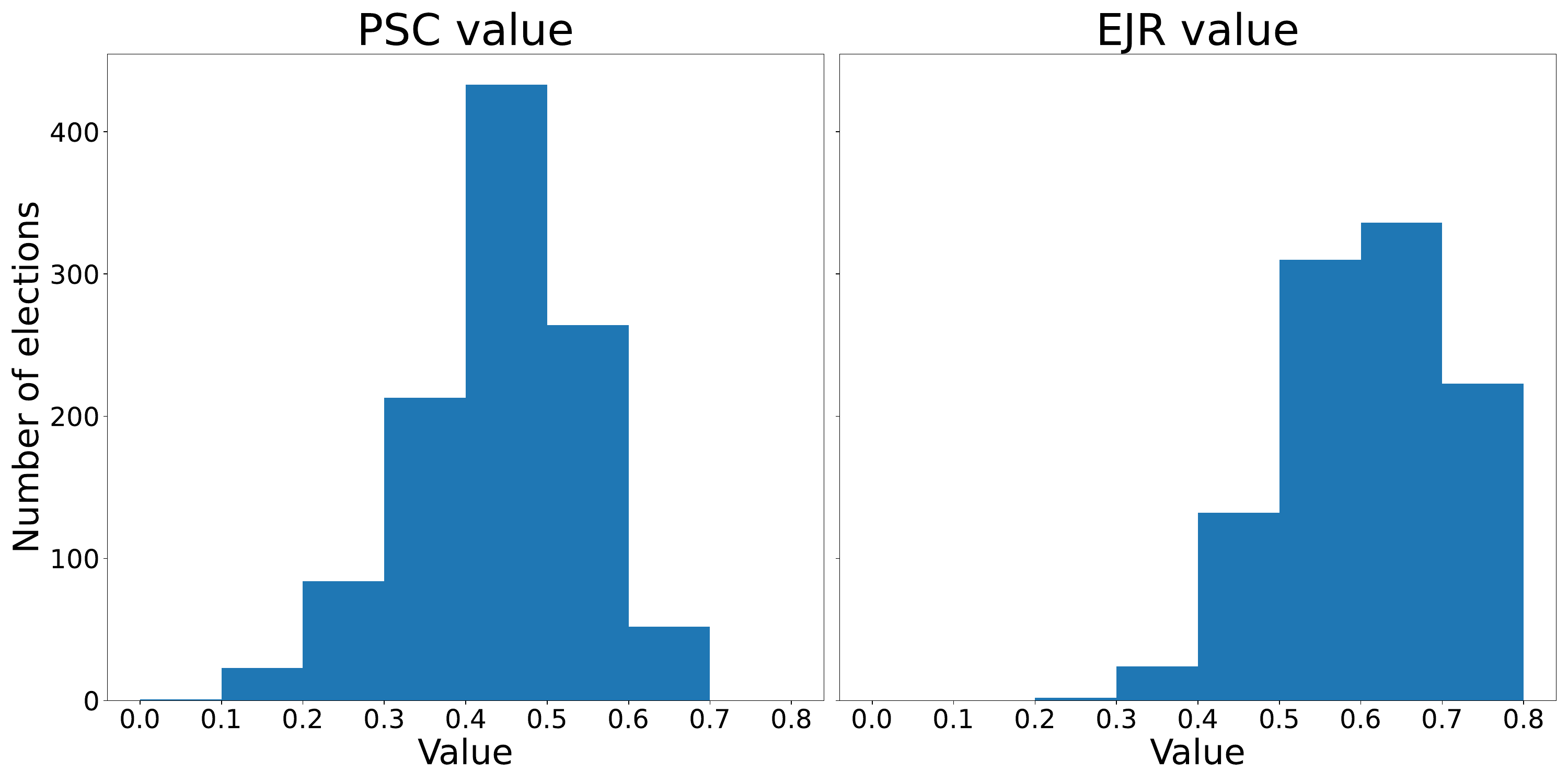}
    \includegraphics[width=.9\linewidth]{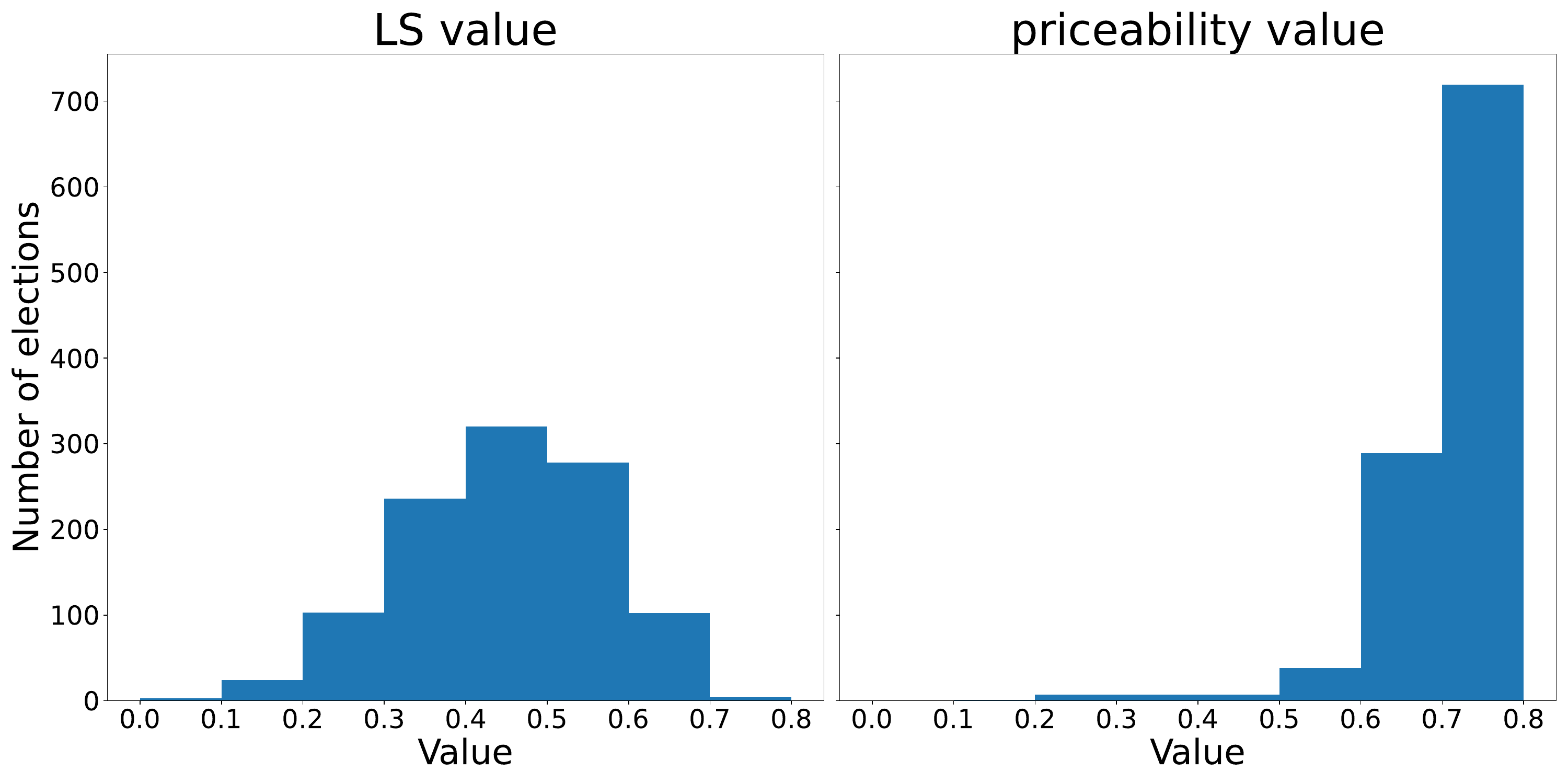}
    \caption{Histograms of optimal values, rounded to one decimal place (completed preferences).}
    \label{fig:hist-complete}
\end{figure}

\subsection{How often do rules produce optimal committees?}
For each voting rule, we counted how often the rule achieves each optimal $\alpha$-value. The results can be found in \Cref{tab:optimal-ptage-avg-dist-complete}, where, for each measure, the value associated with the best-performing rule is highlighted. The overall performance of the rules was slightly worse on the completed data. As in the truncated case, SNTV is most aligned with PSC and LS. On the other hand, on the full preferences, EAR is most aligned with EJR+ and priceability, and now outperforms Scottish STV and Meek STV. Note also that EAR performs better w.r.t. priceability value on the completed data than on the truncated data. 

\begin{table}
    \centering
\begin{tabular}{@{}lcccc@{}}
\toprule
 & PSC & EJR+ & Priceab.& \phantom{X}LS\phantom{X} \\ 
\midrule
S-STV & 763 & 591 & 667 & 743 \\ %
M-STV & 748 & 637 & 675 & 700  \\ %
EAR & 584 & \textbf{{645}} & \textbf{739} & 579 \\ %
SNTV & \textbf{860} & 454 & 543 & \textbf{906} \\ 
seq-RCV & 409 & 618 & 559 & 336 \\ 
\bottomrule
\end{tabular}%
    \caption{For each rule and each axiom, the number of instances for which the rules achieves the optimal $\alpha$-value for completed preferences. The best values in each column appear in bold}
    \label{tab:optimal-ptage-avg-dist-complete}
\end{table}

\subsection{Values Achieved by Rules}
\label{app:comp-boxplots}
We compared the spread of $\alpha$-values achieved by rules to the optimal $\alpha$-values for each measure. The results are presented in \Cref{fig:psc-rules-completed} to \Cref{fig:price-rules-completed}. We observe that the $\alpha$-values of all rules become slightly higher in the complete case – this is in line with the minimal possible values increasing somewhat. Furthermore, the number of seq-RCV committees that have a value of $\alpha > 1$ increases for each measure. In other words, for each axiom, there are more instances where seq-RCV returns a committee that does not satisfy the axiom. We also observe a difference in the spread of priceability values for all rules, compared to that in the truncated case. This has to do with the $\frac{k}{k+1}$ lower bound on the priceability value: the priceability values achieved by rules cluster around this area, to the point where any priceability value that does not is considered an outlier.

\begin{figure}
    \centering
    \includegraphics[width=.95\linewidth]{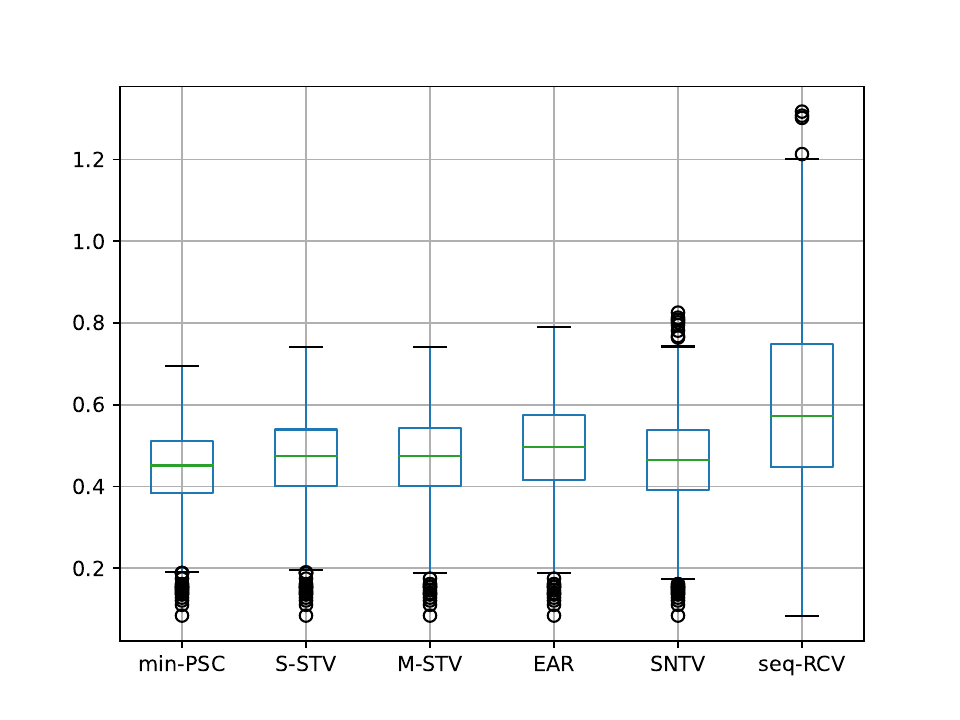}
    \caption{The PSC values achieved by voting rules (completed preferences).}
    \label{fig:psc-rules-completed}
\end{figure}

\begin{figure}
    \centering
    \includegraphics[width=.95\linewidth]{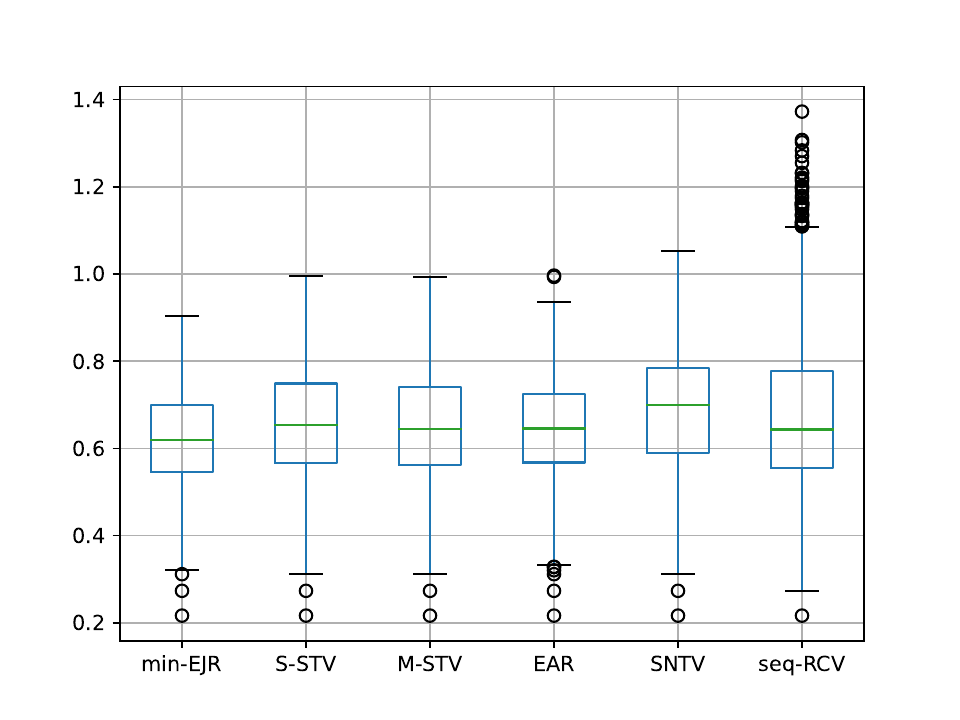}
    \caption{The EJR+ values achieved by voting rules (completed preferences).}
    \label{fig:ejr-rules-completed}
\end{figure}

\begin{figure}
    \centering
    \includegraphics[width=.95\linewidth]{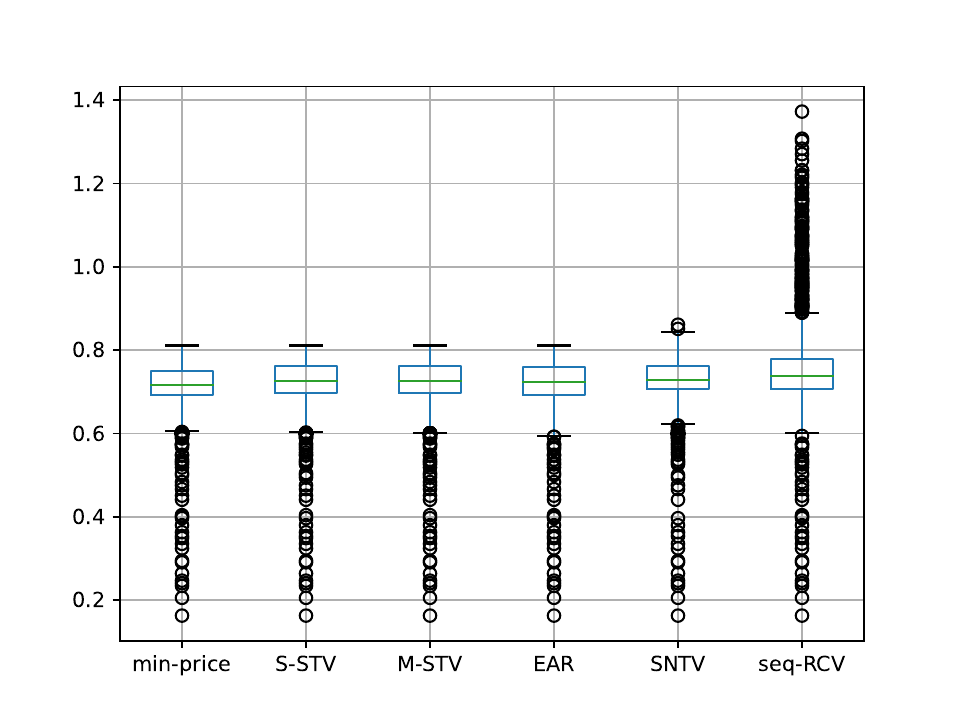}
    \caption{The priceability values achieved by voting rules (completed preferences).}
    \label{fig:price-rules-completed}
\end{figure}

\begin{figure}
    \centering
    \includegraphics[width=.95\linewidth]{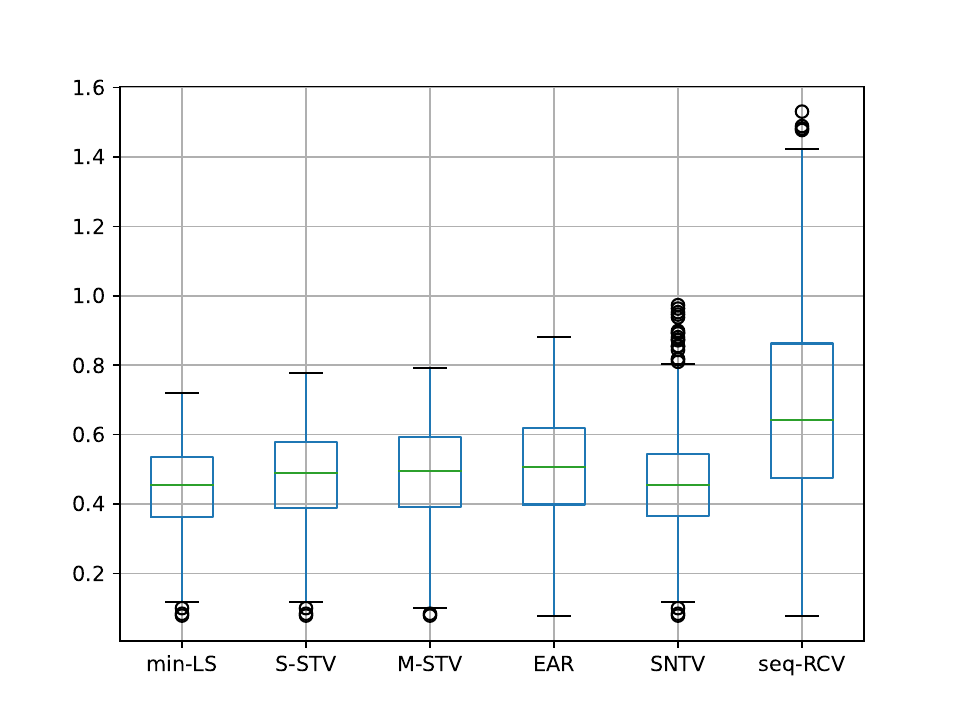}
    \caption{The LS values achieved by voting rules (completed preferences).}
    \label{fig:ls-rules-completed}
\end{figure}

\subsection{Pairwise Comparisons}
\label{app:pw}

We again considered the disagreement and distance between rules. The results are given in \Cref{tab:rules-num-disagreement-complete,tab:rules-avgdist-complete}. On the completed data, S-STV and M-STV agree slightly more often than in the truncated case. Aside from this, disagreement is higher than on the truncated data for all pairs of rules. This difference is perhaps most noticeable for pairwise comparisons that include seq-RCV, which disagrees with every other rule on $t > 100$ more instances than in the experiments on the truncated data.

Furthermore, we had a closer look at PSC values for pairs of rules, restricted to instances on which the rules disagree. We again focused on S-STV vs. seq-RCV (\Cref{fig:stv-rcv-diff-complete}) and \mbox{S-STV} vs. EAR (\Cref{fig:stv-ear-diff-complete}). As was the case on the truncated data, the difference between seq-RCV and S-STV is much more pronounced than what is the case for S-STV and EAR. As mentioned previously, seq-RCV fails 1-PSC more often for completed ballots.

\begin{table}[tb]
    \centering
\begin{tabular}{@{}lccccc@{}} 
\toprule 
 & M-STV & EAR  & SNTV  & seq-RCV \\ %
 \midrule
S-STV & 92 & 306 & 376 & 601 \\ %
M-STV   & -- & 303 & 399 & 555 \\ %
EAR   &   & -- & 536 & 570 \\ %
SNTV    &   &   & -- & 731 \\ %
\bottomrule
\end{tabular}
    \caption{Number of instances pairs of rules disagree on (completed preferences). }
    \label{tab:rules-num-disagreement-complete}
\end{table}

\begin{table}[tb]
    \centering
\begin{tabular}{@{}lcccc@{}} 
\toprule 
 & M-STV & EAR  & SNTV  & seq-RCV \\ %
 \midrule
S-STV &  0.08 & 0.30 & 0.35 & 0.60 \\ %
M-STV     & -- & 0.30 & 0.38 & 0.50 \\ %
EAR    &    & -- &  0.53 & 0.56 \\ %
SNTV    &   &   & -- & 0.76  \\ %
\bottomrule
\end{tabular}
    \caption{Average distance between pairs of rules as fraction of seats (completed preferences).}
    \label{tab:rules-avgdist-complete}
\end{table}

\begin{figure}[t]
    \centering
    \includegraphics[width=\linewidth]{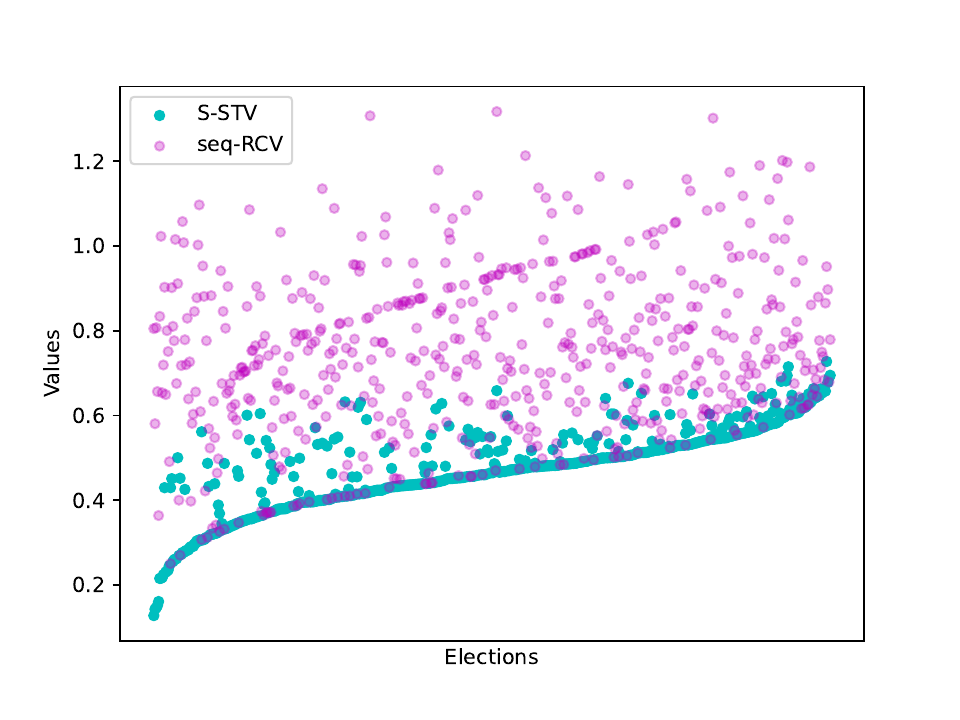}
    \caption{Differences in PSC values between Scottish STV and Sequential RCV when elections where the rules agree are excluded (completed preferences). Elections are ordered by increasing optimal PSC value.}
    \label{fig:stv-rcv-diff-complete}
\end{figure}

\begin{figure}[t]
    \centering
    \includegraphics[width=\linewidth]{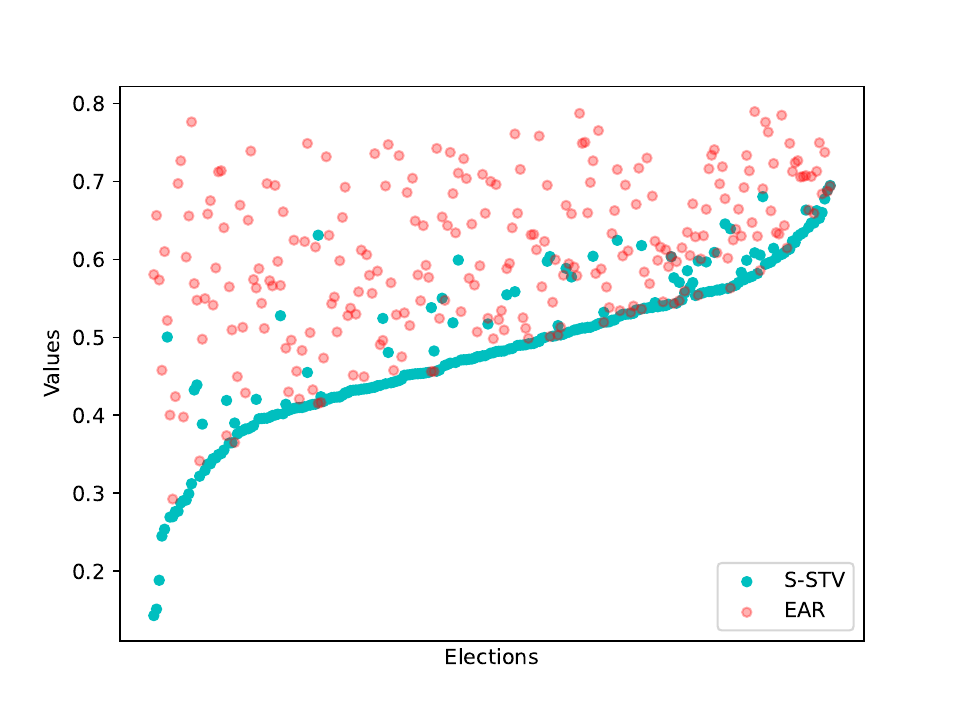}
    \caption{Differences in PSC values between Scottish STV and EAR when elections where the rules agree are excluded (completed preferences). Elections are ordered by increasing optimal PSC value.}
    \label{fig:stv-ear-diff-complete}
\end{figure}
\end{document}